\documentclass[12pt]{article}
\usepackage[margin=1in]{geometry}
\usepackage{amsmath,amssymb,amsthm}
\newcommand{\E}{\mathbb{E}}
\usepackage{graphicx}
\usepackage{booktabs}
\usepackage{natbib}
\usepackage{hyperref}
\usepackage{algorithm}
\usepackage{algorithmic}
\usepackage{multirow}
\usepackage{tikz}
\usetikzlibrary{arrows.meta, positioning}

\newtheorem{proposition}{Proposition}
\newtheorem{remark}{Remark}

\title{The Negative Binomial Chain--Ladder:\\A Full Likelihood Model for Claim Count Reserving}

\author{Robin Van Oirbeek\thanks{University of Antwerp, Department of Mathematics, Antwerp, Belgium. Email: \texttt{robin.vanoirbeek@gmail.com}.}}

\date{\today}

\begin{document}
\maketitle

\begin{abstract}
The Chain--Ladder (CL) method remains the dominant macro-level technique for claims reserving in non-life insurance, yet its classical formulation lacks a coherent probabilistic foundation. Existing stochastic extensions such as the Mack model and the Over-Dispersed Poisson (ODP) framework provide measures of uncertainty but rely on second-moment assumptions or quasi-likelihood variance structures without clear generative interpretations.

This paper develops a Negative Binomial Chain--Ladder (NB--CL) model that embeds the CL method within a full likelihood-based framework. The key contribution is a micro-level derivation showing that the negative binomial distribution arises naturally from a Poisson--Gamma construction: claims arrive according to a Poisson process whose cell-level rate carries a multiplicative Gamma shock, and marginalisation yields negative binomial incremental counts. This derivation gives the dispersion parameter $\kappa$ a structural interpretation as the variability of cell-level claim-generating conditions around the systematic accident-year and development structure, rather than an ad-hoc overdispersion adjustment; we show, conversely, that heterogeneity placed one level up, i.e.\ a frailty shared by an entire accident year, is absorbed by the free accident-year parameters and is not identifiable from a single triangle.

The NB--CL model generalises the Poisson Chain--Ladder model, recovering it exactly in the limit $\kappa \to \infty$; its point estimates coincide with the classical Chain--Ladder (and ODP) estimates in that limit and approximate them closely for finite $\kappa$, while its variance function differs structurally from the ODP one (quadratic vs.\ linear). The model unifies the cross-classified Chain--Ladder family within a single probabilistic hierarchy. Note that alternative likelihood foundations for ODP exist, including the NB1 parameterisation and W\"uthrich's scaled Poisson representation; the NB2 formulation adopted here implies a quadratic variance-mean relationship. A parametric bootstrap procedure is developed to incorporate both process and parameter uncertainty. Simulation studies confirm near-nominal coverage under correct specification once the dispersion parameter is bias-corrected, and a controlled degradation under model misspecification. Empirical illustrations on claim count data (Australian motor bodily injury) and paid amounts (Taylor--Ashe) document both the structural reading of $\kappa$ and the working-approximation status of the model in the amounts case.

\medskip
\noindent\textbf{Keywords:} claims reserving, Chain--Ladder, negative binomial, overdispersion, GLM, stochastic reserving, IBNR
\end{abstract}

\newpage

\section{Introduction}

Claims reserving is a central task in non-life insurance \citep[see][for textbook treatments]{taylor2000,england2002}. At each valuation date, insurers must estimate the liabilities associated with claims that have occurred but are not yet fully settled. These liabilities include both Reported But Not Settled (RBNS) claims and Incurred But Not Reported (IBNR) claims. While micro-level models can analyse the development of individual claims, macro-level reserving techniques remain dominant in practice due to their simplicity, transparency, and robustness.

Among macro-level methods, the Chain--Ladder (CL) technique is by far the most widely used. Its appeal lies in its deterministic structure: development factors are estimated from historical run-off triangles and applied multiplicatively to project future development. Despite its practical importance, the classical CL method lacks a coherent probabilistic foundation. It provides point estimates but no likelihood, and its stochastic extensions rely on assumptions that do not always align with the underlying claim process.

Two prominent stochastic extensions are the Mack model \citep{mack1993,mack1994} and the Over-Dispersed Poisson (ODP) framework \citep{renshaw1998,england2002}. The Mack model specifies only conditional first and second moments of the cumulative claims and does not assume a full probability distribution; prediction uncertainty is obtained from analytic mean-squared-error formulae rather than from a predictive distribution. The ODP model provides a quasi-likelihood interpretation but relies on a variance structure of the form $\text{Var}(N_{i,j}) = \phi\, \text{E}[N_{i,j}]$, which may not reflect the true variability in the data; its predictive distribution is typically obtained via residual resampling \citep{england2002}, which introduces additional assumptions and may not be fully consistent with the underlying claim process.

Recent work has independently pursued full-likelihood negative binomial
models for count triangles. \citet{nietobarajas2026} propose Bayesian
negative binomial models whose primary contribution is to relax the
independence assumption across development years via moving-average
dependence sequences with negative binomial marginals. Our aims are
complementary but distinct: we work in a frequentist GLM framework and,
rather than modelling cross-period dependence, derive the negative
binomial from a micro-level Poisson--Gamma arrival process. This derivation allows us to pin down the identification status of $\kappa$ as a cell-level dispersion parameter (Remark~\ref{rem:absorption})
and recover the classical Chain--Ladder point estimates exactly in the
Poisson limit.

\subsection{Contribution}

This paper develops a Negative Binomial Chain--Ladder (NB--CL) model that provides a full likelihood for the CL method. While likelihood-based formulations of the Chain--Ladder method exist in the literature \citep{renshaw1998,verrall2000}, they typically treat overdispersion as a statistical nuisance parameter without structural interpretation.

The key contribution of this paper is a micro-level derivation showing that the negative binomial distribution arises naturally from a Poisson--Gamma construction. Claims in cell $(i,j)$ arrive according to a Poisson process whose rate carries a multiplicative Gamma shock, $N_{i,j} \mid \varepsilon_{i,j} \sim \mathrm{Poisson}(\exp(\alpha_i)\, w_j\, \varepsilon_{i,j})$ with $\varepsilon_{i,j} \sim \mathrm{Gamma}(\kappa, \kappa)$ independent across cells. Marginalising over the latent shock yields a negative binomial distribution for incremental counts.

This derivation gives the dispersion parameter $\kappa$ a structural interpretation as the variability of cell-level claim-generating conditions, i.e.\ period-specific environment shocks around the systematic accident-year and development structure, rather than as an ad-hoc overdispersion adjustment, and the constancy of $\kappa$ across cells is the assumption of a homogeneous shock distribution.

Heterogeneity one level up, i.e.\ a single frailty shared by an entire accident year, is natural to consider but, as Remark~\ref{rem:absorption} shows, it is structurally invisible in a model with free accident-year parameters: the $\alpha_i$ absorb any realised accident-year frailty, and its dispersion is not identifiable from a single triangle. The estimable $\kappa$ is the cell-level dispersion; identification of year-level heterogeneity requires anchored levels.

Beyond this micro-level foundation, the NB--CL model provides a unifying probabilistic framework for the entire Chain--Ladder family. The Poisson CL model is an exact special case of the NB--CL likelihood, the ODP model a locally equivalent quasi-likelihood approximation, and the Mack model a non-nested specification sharing its point estimates; the residual bootstrap \citep{england2002} is a simulation-based approximation of its predictive distribution. By deriving the CL structure from a micro-level Poisson--Gamma process, the NB--CL model clarifies the assumptions underlying classical reserving techniques and places them within a single coherent statistical framework.

\subsection{Outline}

The remainder of the paper is structured as follows. Section~2 introduces the claims reserving problem and notation. Section~3 presents the NB--CL model. Section~4 derives the model from a micro-level Poisson--Gamma construction. Section~5 discusses estimation, including model selection and practical considerations for sparse triangles. Section~6 develops predictive distributions incorporating both process and parameter uncertainty. Section~7 relates the NB--CL model to existing stochastic reserving approaches and presents a model hierarchy. Section~8 presents simulation results under both correct specification and model misspecification. Section~9 provides empirical illustrations on both claim count and paid amount data. Section~10 discusses implications and Section~11 concludes.

\section{The Claims Reserving Problem}

\subsection{Run-off triangles}

Non-life insurance claims typically evolve over multiple stages between the occurrence of the underlying event and the final settlement. At each valuation date, the insurer must estimate the outstanding liabilities associated with all claims that have occurred prior to that date.

The standard data structure for macro-level reserving is the run-off triangle. Let $N_{i,j}$ denote the incremental claim count (or amount) for accident year $i$ and development year $j$, where $i = 1, \ldots, I$ and $j = 0, \ldots, J-1$. For simplicity, we assume $I = J$, so the triangle is square. At the valuation date, only the upper-left portion of the triangle is observed:
\[
\mathcal{D} = \{N_{i,j} : i + j \leq I\}.
\]
The goal is to predict the lower-right portion:
\[
\mathcal{F} = \{N_{i,j} : i + j > I\},
\]
and to estimate the total reserve $R^{(N)}$ for outstanding counts as:
\[
R^{(N)} = \sum_{i+j > I} N_{i,j}.
\]

\subsection{The deterministic Chain--Ladder method}

The classical Chain--Ladder method estimates cumulative development factors from the observed triangle and applies them multiplicatively to project future development. Let $C_{i,j} = \sum_{l=0}^{j} N_{i,l}$ denote cumulative claims. The development factor for development year $j$ is
\[
\hat{f}_j = \frac{\sum_{i=1}^{I-j} C_{i,j+1}}{\sum_{i=1}^{I-j} C_{i,j}}, \quad j = 0, \ldots, J-2.
\]
The projected ultimate for accident year $i$ is
\[
\hat{C}_{i,J-1} = C_{i,I-i} \cdot \prod_{l=I-i}^{J-2} \hat{f}_l,
\]
and the total reserve estimate is $\hat{R}^{(N)} = \sum_{i=2}^{I} (\hat{C}_{i,J-1} - C_{i,I-i})$.

The deterministic CL method provides point estimates but no measure of uncertainty. The NB--CL model developed in this paper provides a probabilistic foundation that preserves these point estimates exactly in the Poisson limit and approximates them closely for finite $\kappa$, while enabling coherent uncertainty quantification.

\section{The Negative Binomial Chain--Ladder Model}\label{sec:model}

\subsection{Model specification}

The Negative Binomial Chain--Ladder (NB--CL) model assumes that incremental claim counts follow a negative binomial distribution with a log-additive mean structure:
\begin{equation}\label{eq:nbcl}
N_{i,j} \sim \text{NegBin}(\mu_{i,j}, \kappa), \quad \log \mu_{i,j} = \alpha_i + \beta_j,
\end{equation}
where $\mu_{i,j} = \exp(\alpha_i + \beta_j)$ is the expected count for cell $(i,j)$, $\alpha_i$ is the accident-year effect, $\beta_j$ is the development-year effect, and $\kappa > 0$ is the dispersion parameter controlling overdispersion. We adopt the simplex parameterisation
\[
  \sum_{j=0}^{J-1} \exp(\beta_j) = 1,
\]
which gives the development-year effects a direct probabilistic interpretation: $w_j := \exp(\beta_j)$ is the proportion of ultimate claims reported in development year $j$, with $\sum_j w_j = 1$. Under this convention, $\mu_i := \exp(\alpha_i)$ is the expected total claim count for accident year $i$, and the cell mean factorises as $\mu_{i,j} = \mu_i w_j$.

We use the mean-dispersion parameterisation of the negative binomial distribution, under which
\[
\text{E}[N_{i,j}] = \mu_{i,j}, \quad \text{Var}(N_{i,j}) = \mu_{i,j} + \frac{\mu_{i,j}^2}{\kappa}.
\]
The variance exceeds the Poisson variance $\mu_{i,j}$ by a factor that depends on $\kappa$. As $\kappa \to \infty$, the variance approaches the Poisson case; as $\kappa \to 0$, overdispersion becomes extreme.

\subsection{Log-additive structure and Chain--Ladder}

The log-additive structure $\log \mu_{i,j} = \alpha_i + \beta_j$ implies
\[
\mu_{i,j} = \exp(\alpha_i) \cdot \exp(\beta_j),
\]
a multiplicative decomposition into accident-year and development-year components. This two-way cross-classified structure connects the Chain--Ladder method to the analysis of variance framework \citep{kremer1982} and is fundamental to GLM-based reserving.

The ratio of expected cumulative claims between successive development years is
\[
\frac{\sum_{l=0}^{j} \mu_{i,l}}{\sum_{l=0}^{j-1} \mu_{i,l}} = \frac{\sum_{l=0}^{j} \exp(\beta_l)}{\sum_{l=0}^{j-1} \exp(\beta_l)},
\]
which is constant across accident years. This corresponds exactly to the Chain--Ladder development factor for development year $j$.

\subsection{The dispersion parameter $\kappa$}

The dispersion parameter $\kappa$ is best interpreted via the micro-level derivation in Section~\ref{sec:micro}: it represents the variability of cell-level claim-generating conditions around the systematic structure.

Large values of $\kappa$ (e.g., $\kappa > 50$) indicate weak cell-level shocks, with variance close to Poisson. Moderate values (e.g., $5 < \kappa < 50$) represent the typical range for many portfolios with meaningful overdispersion. Small values (e.g., $\kappa < 5$) indicate strong cell-level shocks, i.e.\ pronounced departures of individual cells from the multiplicative structure, possibly signalling pattern instability, unmodelled calendar effects, or model misspecification.

\section{Micro-Level Derivation}\label{sec:micro}

The NB--CL model is the macro-level aggregation of a Poisson--Gamma process with multiplicative development weights. 

\subsection{Micro-level claim arrival process}

Consider a portfolio in which individual claims occur according to a Poisson process. Let $\lambda_i$ denote the underlying claim arrival rate for accident year $i$. Conditional on $\lambda_i$, the number of claims reported in development year $j$ follows a Poisson distribution:
\begin{equation}\label{eq:poisson}
N_{i,j} \mid \lambda_i \sim \text{Poisson}(\lambda_i w_j),
\end{equation}
where $w_j > 0$ is a development weight associated with development year $j$. These weights capture the reporting pattern and satisfy $\sum_{j=0}^{J-1} w_j = 1$.

The multiplicative structure of the Chain--Ladder method is recovered by setting $w_j = \exp(\beta_j)$, which under the simplex constraint $\sum_j \exp(\beta_j) = 1$ of Section~\ref{sec:model} satisfies $\sum_j w_j = 1$ as required.

\begin{remark}[Ingredients for the Chain--Ladder estimators]
Three assumptions together are sufficient to recover the classical
Chain--Ladder development factors from the NB--CL model.
\begin{enumerate}
  \item \emph{Multiplicative mean structure.} The log-additive structure
        $\log \mu_{i,j} = \alpha_i + \beta_j$ ensures that the ratio of
        expected cumulative claims between successive development years is
        constant across accident years, which is the defining property of
        the Chain--Ladder development factor.
  \item \emph{Conditional independence across development years.}
        The cell shocks $\varepsilon_{i,j}$ are independent, so the incremental counts $N_{i,0}, \ldots, N_{i,J-1}$ are independent, both conditionally and marginally. Without this independence, the column-wise accumulation underlying the development factors would not be well-defined. 
  \item \emph{Poisson--Gamma hierarchy.} The Gamma prior on $\lambda_i$
        yields the Negative Binomial marginal and the full likelihood. The
        multiplicative mean structure alone, or a different prior on
        $\lambda_i$, would give a different marginal distribution and a
        different likelihood, even if point estimates coincidentally agreed
        for a particular parameterisation.
\end{enumerate}
The first two ingredients suffice to recover the Chain--Ladder \emph{point
estimates} when estimation proceeds by Poisson or quasi-Poisson scoring,
which solves the marginal-totals equations; under the negative binomial
likelihood with finite $\kappa$ the working weights differ from the Poisson
weights and the estimates depart slightly from the classical ones
(Section~8.2). The third ingredient is required for the full likelihood and
the coherent predictive distribution developed in Section~\ref{sec:pred}.
\end{remark}

\subsection{Gamma shocks at the cell level}

Beyond the systematic accident-year and development effects, individual cells depart from the fitted multiplicative structure because the claim-generating environment fluctuates from one (accident-year, development-year) period to the next: short-lived operational, reporting, or exposure conditions local to a cell rather than shared across a row. To capture this residual cell-level variability, we equip each cell with a multiplicative Gamma shock:
\begin{equation}\label{eq:gamma}
N_{i,j} \mid \varepsilon_{i,j} \sim
\mathrm{Poisson}\!\big(\exp(\alpha_i)\, w_j\, \varepsilon_{i,j}\big),
\qquad
\varepsilon_{i,j} \sim \mathrm{Gamma}(\kappa, \kappa) \ \text{i.i.d.},
\end{equation}
with $\mathrm{E}[\varepsilon_{i,j}] = 1$ and
$\mathrm{Var}(\varepsilon_{i,j}) = 1/\kappa$. Marginalising over the
shock yields $N_{i,j} \sim \mathrm{NegBin}(\mu_{i,j}, \kappa)$ with
$\mu_{i,j} = \exp(\alpha_i) w_j$, independently across cells, which is
exactly the likelihood~\eqref{eq:loglik} maximised in Section~5.

The Gamma distribution is a natural choice for modelling heterogeneity 
in Poisson rates, both because it is conjugate to the Poisson and 
because it leads to a closed-form marginal distribution. This 
Poisson--Gamma structure has deep connections to credibility theory 
\citep{buhlmann1970}, where Gamma-distributed random effects capture 
heterogeneity across risk classes. \citet{gisler2008} develop this 
connection specifically for claims reserving, showing that Bühlmann--Straub 
credibility estimators provide a natural interpretation of Chain--Ladder 
development factors and yield refined uncertainty estimates by combining 
individual accident-year information with portfolio-level pooling. 
A credibility reading of the Chain--Ladder family requires a hierarchical level structure, with accident-year levels drawn around a common centre or anchored by exposure, so that information pools
across rows. The NB--CL model as specified here, with a free
$\alpha_i$ per accident year, deliberately performs no such pooling:
each row's level is estimated from that row alone, and the Gamma
layer of~\eqref{eq:gamma} operates at the cell level. The likelihood-based credibility counterpart of \citet{gisler2008} hence only requires a common-rate, exposure-anchored hierarchy on the levels, and is developed in \citet{vanoirbeek2026cred}.

\begin{proposition}[Conjugacy of the Gamma shock]\label{prop:gamma}
Among unit-mean mixing distributions on $(0,\infty)$, the Gamma family
is the conjugate family for Poisson sampling \citep{diaconis1979}.
Conjugacy closes the mixing integral in negative binomial form, places the marginal in the exponential dispersion family, hereby enabling profile-likelihood estimation of $\kappa$ within a standard GLM, and yields the linear posterior mean underlying credibility formulae
\citep{jewell1974}.
\end{proposition}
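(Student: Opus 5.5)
The proposition bundles four claims: conjugacy, closed negative binomial form, exponential dispersion membership for fixed $\kappa$, and the linear posterior mean. I would prove them in that order.

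\emph{Conjugacy.} Write the Poisson likelihood in the rate $\theta$ as $p(n\mid\theta)\propto \theta^{n}e^{-\theta}$. In the Diaconis--Ylvisaker sense, the conjugate family is the family of priors proportional to $\theta^{a-1}e^{-b\theta}$, which is exactly $\mathrm{Gamma}(a,b)$. The unit-mean restriction $a/b=1$ selects the one-parameter subfamily $\mathrm{Gamma}(\kappa,\kappa)$ of \eqref{eq:gamma}, with the cell mean $\exp(\alpha_i)w_j$ carried as a known scale.

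I would then check closure under updating. With $\theta=\mu_{i,j}\varepsilon_{i,j}$, the posterior of $\varepsilon_{i,j}$ given $N_{i,j}=n$ is proportional to $\varepsilon^{n+\kappa-1}e^{-(\mu_{i,j}+\kappa)\varepsilon}$. This is $\mathrm{Gamma}(\kappa+n,\kappa+\mu_{i,j})$, so the family is closed.

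\emph{Mixing integral.} I would compute directly
\[
\Pr(N_{i,j}=n)=\int_0^\infty \frac{(\mu\varepsilon)^n e^{-\mu\varepsilon}}{n!}\,\frac{\kappa^\kappa\varepsilon^{\kappa-1}e^{-\kappa\varepsilon}}{\Gamma(\kappa)}\,d\varepsilon
=\frac{\Gamma(n+\kappa)}{n!\,\Gamma(\kappa)}\Big(\frac{\kappa}{\kappa+\mu}\Big)^{\kappa}\Big(\frac{\mu}{\kappa+\mu}\Big)^{n}.
\]
This is a Gamma-function integral, and the result is the NB2 law with mean $\mu$ and variance $\mu+\mu^2/\kappa$, consistent with Section~\ref{sec:model}.

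\emph{Exponential dispersion form.} For fixed $\kappa$, rewrite the probability mass function as $\exp\{n\vartheta - b(\vartheta) + c(n,\kappa)\}$. Here the canonical parameter is $\vartheta=\log\{\mu/(\kappa+\mu)\}$, the cumulant function is $b(\vartheta)=-\kappa\log(1-e^{\vartheta})$, and $c(n,\kappa)=\log\frac{\Gamma(n+\kappa)}{n!\Gamma(\kappa)}$. This is a one-parameter natural exponential family in $\mu$, with the log link entering through $\mu_{i,j}=\exp(\alpha_i+\beta_j)$.

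It follows that for each fixed $\kappa$ the maximiser over $(\alpha,\beta)$ is obtained by IRLS, as in a standard GLM. Maximising the resulting profile log-likelihood over the scalar $\kappa$ then gives the profile-likelihood estimate.

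\emph{Linear posterior mean.} From the posterior computed above,
\[
\E[\varepsilon\mid N=n]=\frac{\kappa+n}{\kappa+\mu}=Z\,\frac{n}{\mu}+(1-Z)\cdot 1,
\qquad Z=\frac{\mu}{\mu+\kappa}.
\]
This is the credibility-weighted form, and it exhibits the Jewell linearity. I would also note Jewell's general result that the linear posterior mean characterises conjugate exponential-family priors.

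\emph{Main obstacle.} The hardest part is the phrase ``among unit-mean mixing distributions, the Gamma family is \emph{the} conjugate family,'' which reads as a uniqueness claim. Diaconis--Ylvisaker uniqueness holds only relative to their definition of conjugacy: linear posterior expectation of the mean parameter, for priors on the natural parameter space.

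I would therefore argue uniqueness through the linear posterior mean. Suppose $\E[\theta\mid n]$ is affine in $n$ for all $n$ under a mixing law $G$. The Poisson identity $\E[\theta\mid n]=(n+1)p(n+1)/p(n)$ then gives a recursion for the marginal $p$. Solving that recursion forces $p$ to be negative binomial, and moment determinacy of the Poisson mixture identifies $G$ as a Gamma distribution. Imposing unit mean then pins down $\mathrm{Gamma}(\kappa,\kappa)$.

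I expect the delicate points to be justifying moment/mixture identifiability and excluding degenerate cases (point mass, i.e.\ $\kappa=\infty$, the Poisson limit). I would handle these by citing the identifiability of Poisson mixtures rather than proving it.
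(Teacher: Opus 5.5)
Your proposal is correct, but it does not follow the paper's route. The paper gives no argument for this proposition beyond the citations to Diaconis--Ylvisaker and Jewell. Its only supporting computation is the Poisson--Gamma integral of Appendix~\ref{app:proof}, written for $\lambda\sim\mathrm{Gamma}(\kappa,\kappa/\mu)$; your mixing-integral step is the same calculation after the substitution $\lambda=\mu\varepsilon$.

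What you add is a self-contained verification of each bundled claim:
\begin{itemize}
\item closure of the Gamma family under updating, with posterior $\mathrm{Gamma}(\kappa+n,\kappa+\mu)$;
\item the fixed-$\kappa$ exponential-family form, where your $b(\vartheta)=-\kappa\log(1-e^{\vartheta})$ does give $b'=\mu$ and $b''=\mu+\mu^2/\kappa$;
\item the credibility weight $Z=\mu/(\mu+\kappa)$;
\item most usefully, a uniqueness argument through $\E[\theta\mid n]=(n+1)p(n+1)/p(n)$.
\end{itemize}

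The paper's citation route is shorter, but it leaves the sense of ``the conjugate family'' implicit. Your route makes clear that uniqueness can only come from the linear-posterior-mean characterisation. Closure under updating does not single out the Gamma: any family $h(\theta)\theta^{a-1}e^{-b\theta}$ with fixed $h$ is also closed under Poisson sampling. Moreover, the unit-mean subfamily $\mathrm{Gamma}(\kappa,\kappa)$ is not itself closed, since its posterior has mean $(\kappa+n)/(\kappa+\mu)$.

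Two small points.
\begin{itemize}
\item The converse direction you attribute to Jewell (linearity forces a conjugate prior) is the Diaconis--Ylvisaker characterisation. Jewell's 1974 result is the forward direction: conjugate priors make credibility exact.
\item In solving $(n+1)p(n+1)=(a+cn)p(n)$, you should record the constraints on $a$ and $c$. Since $p(n)>0$ for every $n$ under a mixing law on $(0,\infty)$, you need $a>0$ and $c\ge 0$. Summability then forces $c<1$, and $c=0$ is the degenerate Poisson case.
\end{itemize}

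With these in place, the final identification of $G$ as Gamma goes through by Poisson-mixture identifiability. Alternatively, as your ``moment determinacy'' remark suggests, it follows because the factorial moments of $N$ are the moments of $G$ and the Gamma law is moment-determinate.
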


Non-conjugate mixing distributions lose the GLM machinery, not
identifiability. The lognormal mixture has no closed-form marginal; the inverse Gaussian has a closed-form marginal \citep{willmot1987} that lies outside the exponential dispersion family, so $\kappa$ is then estimated by direct numerical likelihood rather than via \texttt{glm.nb}. Note that the inverse Gaussian belongs to the natural exponential family with power variance function (Tweedie power $p = 3$): the tractability of the Gamma choice is a consequence of Poisson \emph{conjugacy} specifically, not of NEF-PVF membership.

\subsection{Marginal distribution of incremental counts}

Marginalising over the cell-level shock $\varepsilon_{i,j}$ yields the
distribution of $N_{i,j}$:
\[
N_{i,j} \;=\; \int_0^\infty
\mathrm{Poisson}\!\bigl(\exp(\alpha_i)\, w_j\, \varepsilon\bigr)\cdot
\mathrm{Gamma}(\varepsilon;\, \kappa, \kappa)\, d\varepsilon.
\]

This integral is the well-known Poisson--Gamma mixture, which yields a negative binomial distribution:
\begin{equation}\label{eq:marginal}
N_{i,j} \sim \text{NegBin}(\mu_{i,j}, \kappa), \quad \mu_{i,j} = \exp(\alpha_i) \cdot w_j.
\end{equation}
Taking logarithms and absorbing the normalisation into $\beta_j$:
\[
\log \mu_{i,j} = \alpha_i + \beta_j,
\]
which recovers the NB--CL model specification~\eqref{eq:nbcl}.

\begin{proposition}[Poisson--Gamma mixture]
Let $N \mid \lambda \sim \text{Poisson}(\lambda w)$ and $\lambda \sim \text{Gamma}(\kappa, \kappa/\mu)$. Then $N \sim \text{NegBin}(\mu w, \kappa)$ with $\text{E}[N] = \mu w$ and $\text{Var}(N) = \mu w + (\mu w)^2/\kappa$.
\end{proposition}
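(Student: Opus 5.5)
The plan is to compute the marginal probability mass function of $N$ directly by integrating out $\lambda$, and then read off the moments, cross-checked by the laws of total expectation and total variance. Throughout, $\text{Gamma}(\kappa,\kappa/\mu)$ is read in the shape--rate parameterisation. Its density is
\[
g(\lambda)=\frac{(\kappa/\mu)^\kappa}{\Gamma(\kappa)}\lambda^{\kappa-1}e^{-\kappa\lambda/\mu},
\]
so that $\text{E}[\lambda]=\mu$ and $\text{Var}(\lambda)=\mu^2/\kappa$. This is the same convention under which $\text{Gamma}(\kappa,\kappa)$ in \eqref{eq:gamma} has unit mean. In fact $\lambda=\mu\varepsilon$ with $\varepsilon\sim\text{Gamma}(\kappa,\kappa)$, which ties the statement to the cell-shock construction.

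\textbf{Step 1: the pmf.} For $n\in\{0,1,2,\dots\}$ I would write
\[
P(N=n)=\int_0^\infty e^{-\lambda w}\frac{(\lambda w)^n}{n!}\,g(\lambda)\,d\lambda .
\]
Collecting powers of $\lambda$ gives the kernel $\lambda^{n+\kappa-1}e^{-\lambda(w+\kappa/\mu)}$, which is an unnormalised Gamma density with shape $n+\kappa$ and rate $w+\kappa/\mu$. This is exactly the conjugacy step referred to in Proposition~\ref{prop:gamma}. The integral therefore equals $\Gamma(n+\kappa)/(w+\kappa/\mu)^{n+\kappa}$. Multiplying numerator and denominator by $\mu^{n+\kappa}$ and simplifying gives
\[
P(N=n)=\frac{\Gamma(n+\kappa)}{\Gamma(\kappa)\,n!}\Big(\frac{\kappa}{\kappa+\mu w}\Big)^{\kappa}\Big(\frac{\mu w}{\kappa+\mu w}\Big)^{n}.
\]
This is the mean--dispersion negative binomial pmf with mean $\mu w$ and size $\kappa$, i.e.\ $\text{NegBin}(\mu w,\kappa)$ as specified in Section~\ref{sec:model}.

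\textbf{Step 2: the moments.} Rather than summing the series, I would use the tower property. First, $\text{E}[N]=\text{E}[\text{E}(N\mid\lambda)]=w\,\text{E}[\lambda]=\mu w$. Second, the law of total variance gives
\[
\text{Var}(N)=\text{E}[\lambda w]+\text{Var}(\lambda w)=\mu w+w^2\mu^2/\kappa=\mu w+(\mu w)^2/\kappa .
\]
This agrees with the variance formula of the mean--dispersion parameterisation, so the two steps are consistent.

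\textbf{Main obstacle.} No step is technically hard. The only care needed is in the parameterisation bookkeeping: confirming that the rate convention yields mean $\mu$, and that the success probability $p=\kappa/(\kappa+\mu w)$ matches the mean--dispersion form used earlier. I would state the rate convention explicitly at the outset so that this is settled before the integral is computed.
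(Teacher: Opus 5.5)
Your proposal is correct and follows essentially the same route as the paper's proof in Appendix~\ref{app:proof}: integrate the Poisson pmf against the shape--rate Gamma density, recognise the unnormalised Gamma kernel with shape $n+\kappa$, and simplify to the mean--dispersion negative binomial pmf. The only differences are that you keep the weight $w$ throughout (the appendix sets $w=1$) and you derive the mean and variance via the laws of total expectation and total variance, which the paper's proof leaves implicit.
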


\begin{remark}[Cumulative counts are also conditionally Poisson]
Conditional on the cell-shock vector
$(\varepsilon_{i,0},\ldots,\varepsilon_{i,j})$, the increments $N_{i,l}$ are
independent Poisson variables, so by additivity the cumulative count
$C_{i,j}=\sum_{l=0}^{j}N_{i,l}$ satisfies
\[
C_{i,j}\mid(\varepsilon_{i,l})_{l\le j}\;\sim\;
\mathrm{Poisson}\!\Bigl(\exp(\alpha_i)\sum_{l=0}^{j}w_l\,\varepsilon_{i,l}\Bigr).
\]
The cumulative triangle is thus conditionally Poisson, with the cumulative
weighted shock $\sum_{l\le j}w_l\varepsilon_{i,l}$ replacing the single cell
weight. Because the shocks have unit mean, the marginal expectation is
\[
\E[C_{i,j}]=\exp(\alpha_i)\sum_{l=0}^{j}w_l=\exp(\alpha_i)\,W_j,
\qquad W_j=\sum_{l\le j}w_l,
\]
and the expected age-to-age factor
\[
\frac{\E[C_{i,j+1}]}{\E[C_{i,j}]}=\frac{W_{j+1}}{W_j}
\]
is constant across accident years, the defining property of the
Chain--Ladder development factor. The constancy is a property of the
\emph{marginal} mean: for a fixed shock realisation the ratio
$\bigl(\sum_{l\le j+1}w_l\varepsilon_{i,l}\bigr)/
\bigl(\sum_{l\le j}w_l\varepsilon_{i,l}\bigr)$ depends on $i$ through the
realised shocks, and the common development factor emerges only once the
unit-mean shocks are averaged out. The CL formula is therefore not an
algebraic coincidence but follows from the conditional-Poisson cell structure
together with the unit-mean normalisation of the shocks.
\end{remark}

\subsection{Interpretation of $\kappa$}

The dispersion parameter $\kappa$ controls the variability of
cell-level conditions around the systematic multiplicative structure.
As $\kappa \to \infty$ the shocks degenerate and the model reduces to
the Poisson Chain--Ladder; for finite $\kappa$,
$\mathrm{Var}(N_{i,j}) = \mu_{i,j} + \mu_{i,j}^2/\kappa$, which
exceeds the Poisson variance and reflects overdispersion due to
cell-level shocks.

\begin{remark}[Accident-year frailty is absorbed by the $\alpha_i$]
\label{rem:absorption}
Replacing the cell-level shocks in~\eqref{eq:gamma} by a single
frailty shared across an accident year, i.e.\ $\lambda_i \sim \mathrm{Gamma}(\kappa, \kappa/\exp(\alpha_i))$ with cells conditionally independent given $\lambda_i$, produces the same
negative binomial \emph{marginals} but a different joint law, whose
likelihood factorises into a negative binomial row total and a
frailty-free multinomial split. With free $\alpha_i$ the realised
frailty is absorbed into the fitted accident-year parameter, and the
frailty dispersion is not identifiable from a single triangle. The reason is
visible in the factorisation: the multinomial split carries no frailty, since a
row-level frailty scales all cells of a row equally and cancels in the
conditional split, while the negative binomial row-total component is fitted
exactly by the free $\alpha_i$, leaving a profile likelihood that increases
monotonically in $\kappa$ towards the Poisson limit. A formal statement and
proof are given in \citet{vanoirbeek2026trunc}. Numerically,
fitting~\eqref{eq:loglik} to data generated from the year-level model
with $\kappa = 10$ returns $\hat{\kappa}$ in the hundreds of thousands or larger in
every replication. The dispersion the NB--CL model estimates is
therefore the cell-level $\kappa$ of~\eqref{eq:gamma};
accident-year heterogeneity is identified only through anchored or
hierarchical levels and is outside the scope of the present model.
\end{remark}

\subsection{The $\kappa \to 0$ limit}

The limit $\kappa \to 0$ represents extreme cell-level shock dispersion.
In this regime, the Gamma shocks $\varepsilon_{i,j} \sim \mathrm{Gamma}(\kappa,\kappa)$
of~\eqref{eq:gamma} become increasingly diffuse:
$\mathrm{Var}(\varepsilon_{i,j}) = 1/\kappa \to \infty$, so individual cells
depart arbitrarily far from the multiplicative structure. The cell-level variance
explodes accordingly,
\[
\mathrm{Var}(N_{i,j}) \approx \frac{\mu_{i,j}^2}{\kappa} \to \infty.
\]

The practical implication for reserving is that age-to-age factors become
unreliable. A development factor $\hat{f}_j$ is a ratio of cumulative counts
across accident years. When the counts in any row have near-infinite variance,
the numerator and denominator of this ratio are both highly unstable, and the
resulting factor can take extreme values. Reserve estimates for individual
accident years may diverge widely as a consequence, even if the total reserve
remains finite in expectation.

No borrowing of strength across rows occurs in this model at any value
of $\kappa$: with a free $\alpha_i$ per accident year, each row's level
is estimated from that row alone. Cross-row pooling, i.e.\ the credibility-theoretic content of the Poisson--Gamma hierarchy \citep{buhlmann1970}, requires a hierarchical structure on the
levels and is the subject of the credibility extension
(Section~\ref{sec:extensions}).

The profile likelihood for $\kappa$ also flattens as $\kappa \to 0$, making
estimation unstable in this regime.

\begin{remark}[Multinomial split given the row total]
The conditional-Poisson cell structure implies a multinomial development
split. Conditional on the shock vector
$(\varepsilon_{i,0},\ldots,\varepsilon_{i,J-1})$ and on the row total
$N_i = \sum_j N_{i,j}$, the vector of increments is multinomial with cell
probabilities proportional to $w_j \varepsilon_{i,j}$. In the Poisson
limit $\kappa \to \infty$ the shocks degenerate and the split reduces to
$\mathrm{Multinomial}(N_i, \boldsymbol{w})$: each claim is allocated
independently to development period $j$ with probability $w_j$. This is
the point of contact with multinomial development models such as the
Dirichlet--multinomial framework of \citet{srirams2021}
(Section~7.5), which places the variability on the split probabilities
themselves rather than on cell-level rates.
\end{remark}

\section{Estimation}

\subsection{Log-likelihood}

Under the NB--CL model, the log-likelihood for the observed triangle is
\begin{equation}\label{eq:loglik}
\ell(\boldsymbol{\alpha}, \boldsymbol{\beta}, \kappa) = \sum_{i=1}^{I} \sum_{j=0}^{I-i} \log f_{\text{NegBin}}(N_{i,j}; \mu_{i,j}, \kappa),
\end{equation}
where $f_{\text{NegBin}}$ denotes the negative binomial probability mass function. Using the mean-dispersion parameterisation:
\[
\log f_{\text{NegBin}}(N_{i,j}; \mu_{i,j}, \kappa) = \log \Gamma(N_{i,j} + \kappa) - \log \Gamma(\kappa) - \log(N_{i,j}!) + \kappa \log\!\left(\frac{\kappa}{\kappa + \mu_{i,j}}\right) + N_{i,j} \log\!\left(\frac{\mu_{i,j}}{\kappa + \mu_{i,j}}\right).
\]

\subsection{Identifiability constraints}\label{sec:identifiability}

The log-additive structure $\log \mu_{i,j} = \alpha_i + \beta_j$ is not identifiable without constraints, since adding a constant to all $\alpha_i$ and subtracting the same constant from all $\beta_j$ leaves $\mu_{i,j}$ unchanged. We impose the simplex constraint
\[
  \sum_{j=0}^{J-1} \exp(\beta_j) = 1,
\]
which gives $w_j = \exp(\beta_j)$ the interpretation of a development-pattern probability and makes $\mu_i = \exp(\alpha_i)$ the expected total claim count for accident year $i$.

\paragraph{Reparameterisation from GLM software.}
Standard GLM software (e.g.\ \texttt{MASS::glm.nb}) imposes identifiability via the treatment-contrast convention $\tilde{\beta}_0 = 0$, returning coefficients $(\tilde{\alpha}_i, \tilde{\beta}_j)$ that do not satisfy the simplex constraint. To obtain the simplex-parameterised coefficients used throughout this paper, apply the post-hoc transformation
\[
  S = \sum_{l=0}^{J-1} \exp(\tilde{\beta}_l),
  \qquad
  \alpha_i = \tilde{\alpha}_i + \log S,
  \qquad
  \beta_j = \tilde{\beta}_j - \log S.
\]
This is a one-to-one reparameterisation: the cell means $\mu_{i,j}$, the likelihood, the dispersion estimate $\hat{\kappa}$, and all predictive quantities are unchanged. Only the interpretation of the individual $\alpha_i$ and $\beta_j$ coefficients shifts.
\subsection{Estimation via GLM}

The NB--CL model is a generalised linear model with negative binomial response and log link. Estimation proceeds in two stages. First, for fixed $\kappa$, the conditional MLEs of $(\boldsymbol{\alpha}, \boldsymbol{\beta})$ are obtained by fitting a GLM with negative binomial family and log link. Second, the dispersion parameter $\kappa$ is estimated by maximising the profile likelihood $\ell_p(\kappa) = \ell(\hat{\boldsymbol{\alpha}}(\kappa), \hat{\boldsymbol{\beta}}(\kappa), \kappa)$.

In R, this can be implemented using \texttt{MASS::glm.nb}, which performs both steps automatically.

\subsection{Connection to classical Chain--Ladder estimators}

In the limit $\kappa \to \infty$, the negative binomial distribution reduces to the Poisson, and the NB--CL model becomes a Poisson GLM with log link. In this case, the MLEs of $\alpha_i$ and $\beta_j$ coincide with the classical Chain--Ladder estimators, the fitted values $\hat{\mu}_{i,j}$ reproduce the Chain--Ladder projections, and the variance reduces to $\text{Var}(N_{i,j}) = \mu_{i,j}$.

Thus, the NB--CL model generalises the classical Chain--Ladder method while preserving its point estimates in the Poisson limit.

\subsection{Model selection: testing for overdispersion}\label{sec:modelselection}

Since the Poisson CL model is nested within the NB--CL model as the limiting case $\kappa \to \infty$, formal model selection tools can be applied to assess whether the additional dispersion parameter is warranted.

\paragraph{Likelihood ratio test.}
The null hypothesis $H_0: \kappa = \infty$ (Poisson) can be tested against $H_1: \kappa < \infty$ (NB--CL) via the likelihood ratio statistic
\[
\Lambda = 2\bigl[\ell_{\text{NB}}(\hat{\boldsymbol{\alpha}}, \hat{\boldsymbol{\beta}}, \hat{\kappa}) - \ell_{\text{Pois}}(\hat{\boldsymbol{\alpha}}_0, \hat{\boldsymbol{\beta}}_0)\bigr],
\]
where $\ell_{\text{NB}}$ and $\ell_{\text{Pois}}$ denote the maximised log-likelihoods of the NB--CL and Poisson CL models, respectively. Under $H_0$, the parameter $\kappa$ lies on the boundary of the parameter space, so the standard $\chi^2_1$ reference distribution does not apply. Instead, $\Lambda$ follows a 50:50 mixture of a point mass at zero and $\chi^2_1$ \citep{self1987}, yielding a $p$-value of $\frac{1}{2}\Pr(\chi^2_1 \geq \Lambda)$.

In both empirical illustrations (Section~\ref{sec:empirical}), the test is decisive. For the Australian motor bodily injury count data, $\Lambda = 2{,}550.1$ ($p < 10^{-20}$); for the Taylor--Ashe paid amounts data, $\Lambda = 1{,}902{,}378$ ($p < 10^{-20}$). 

\paragraph{Information criteria.}
AIC and BIC provide complementary model selection tools that do not require boundary corrections:
\[
\text{AIC} = -2\ell + 2p, \qquad \text{BIC} = -2\ell + p \log n,
\]
where $p$ is the number of parameters (the NB--CL model has one additional parameter relative to the Poisson CL). For the Australian count data, $\Delta\text{AIC} = 2{,}548$ in favour of the NB--CL model; for the Taylor--Ashe data, $\Delta\text{AIC} = 1{,}902{,}376$. When $\Delta\text{AIC}$ or $\Delta\text{BIC}$ is close to zero, the simpler Poisson CL model may be preferred on parsimony grounds.

\subsection{Practical considerations}

\subsubsection{Zero and sparse cells}

Real triangles often contain zeros, especially in the tail (late development years, recent accident years). For Poisson and negative binomial GLMs with log link, zero cells are permissible since the negative binomial assigns positive probability to zero. If, however, an entire row or column is zero, separation occurs and the MLE for that $\alpha_i$ or $\beta_j$
diverges to $-\infty$. Even near-zero cells can cause near-separation and inflated standard errors.

\subsubsection{Small triangles}

For small triangles (e.g., $I = 5$ or 6), the number of parameters approaches the number of observations. With $I$ accident-year effects, $I - 1$ development-year effects (one constrained), and one dispersion parameter, we have approximately $2I$ parameters from $I(I+1)/2$ cells. Overfitting is a real risk.

For sparse triangles, practitioners should consider parsimonious models with parametric development curves (reducing the number of $\beta_j$ parameters), hierarchical models that shrink $\alpha_i$ toward a common mean, or ridge regularisation by adding a penalty $\lambda(\sum_i \alpha_i^2 + \sum_j \beta_j^2)$ to the log-likelihood.

\subsubsection{Diagnostics}

We recommend examining the profile likelihood for $\kappa$ to verify that it is unimodal and well-peaked rather than flat or multimodal. The standard errors on $\alpha_i$ should be stable; if they explode for recent accident years, this indicates estimation problems. Pearson residuals should be consistent with negative binomial assumptions, showing no systematic patterns across accident years, development years, or calendar years. Finally, the condition number of the Hessian should be monitored, with values above 1000 suggesting numerical instability.

\subsubsection{Interpreting extreme values of $\hat{\kappa}$}

The dispersion parameter $\kappa$ governs the degree of cell-level overdispersion --- the variability of individual cells around the fitted multiplicative structure --- with smaller values indicating greater overdispersion. Table~\ref{tab:kappa_interp} provides guidance for interpretation.

\begin{table}[!htbp]
\centering
\caption{Interpretation of the dispersion parameter $\kappa$}
\label{tab:kappa_interp}
\begin{tabular}{lll}
\toprule
$\kappa$ & Variance-to-mean ratio & Interpretation \\
\midrule
$\kappa \to \infty$ & $\text{Var}/\mu = 1$ & Poisson (no overdispersion) \\
$\kappa = 50$ & $\text{Var}/\mu = 1 + \mu/50$ & Mild overdispersion \\
$\kappa = 20$ & $\text{Var}/\mu = 1 + \mu/20$ & Moderate overdispersion \\
$\kappa = 5$ & $\text{Var}/\mu = 1 + \mu/5$ & High overdispersion \\
$\kappa = 1$ & $\text{Var}/\mu = 1 + \mu$ & Extreme (geometric distribution) \\
$\kappa \to 0$ & $\text{Var}/\mu \to \infty$ & Variance explodes \\
\bottomrule
\end{tabular}
\end{table}

At $\kappa = 1$, the negative binomial reduces to the geometric distribution, implying that the coefficient of variation of the cell-level shocks is 100\%---individual cells depart strongly from the multiplicative structure.

\begin{remark}[Warning signs for small $\hat{\kappa}$]
In practice, estimates of $\hat{\kappa} < 3$ are rare and should prompt investigation rather than blind application of the model. Such extreme heterogeneity may indicate structural breaks in the portfolio (such as mergers, product changes, or underwriting shifts), unmodelled calendar-year effects that have been absorbed into the dispersion parameter, or data quality issues and coding errors. The simulation study in Section~8 demonstrates that the NB--CL model with bias-corrected $\kappa$ maintains good coverage even at $\kappa = 2$, so the method remains reliable in these extreme cases. Nevertheless, practitioners encountering $\hat{\kappa} < 3$ should examine residual diagnostics, test for calendar-year effects, and consider whether the log-additive structure is appropriate for the portfolio.
\end{remark}

\section{Predictive Distribution and Uncertainty Quantification}\label{sec:pred}

The NB--CL likelihood yields a coherent predictive distribution for future incremental claims and for the total reserve. The plug-in form accounts for process variance only; parameter uncertainty is added via the parametric bootstrap of Section~\ref{sec:bootstrap}.

\subsection{Plug-in predictive distribution}

Conditional on the estimated parameters $(\hat{\boldsymbol{\alpha}}, \hat{\boldsymbol{\beta}}, \hat{\kappa})$, the future incremental counts satisfy
\begin{equation}\label{eq:plugin}
N_{i,j}^{\text{future}} \sim \text{NegBin}(\hat{\mu}_{i,j}, \hat{\kappa}), \quad i + j > I,
\end{equation}
where $\hat{\mu}_{i,j} = \exp(\hat{\alpha}_i + \hat{\beta}_j)$.

The total reserve is
\[
R^{(N)} = \sum_{i+j > I} N_{i,j}^{\text{future}}.
\]
Since future cells are conditionally independent given the parameters, the distribution of $R^{(N)}$ can be computed via convolution or Monte Carlo simulation.

\subsection{Process variance only}

The plug-in predictive distribution~\eqref{eq:plugin} incorporates \emph{process variance}, i.e.\ the inherent randomness in future claim counts given fixed, known parameters. Hence, it quantifies how much future counts would vary around their mean if the true parameter vector $\boldsymbol{\theta} = (\boldsymbol{\alpha}, \boldsymbol{\beta}, \kappa)$ were known exactly.

The parameters, however, are not known. They are estimated from the same
observed triangle $\mathcal{D}$ that is used to make the predictions. A
different realisation of the triangle would yield different estimates
$(\hat{\boldsymbol{\alpha}}, \hat{\boldsymbol{\beta}}, \hat{\kappa})$ and
therefore different predicted reserves. This additional variability, i.e.\ the extent to which the reserve estimate changes as a function of which triangle was observed, is \emph{estimation variance}, and it is ignored by the plug-in
approach.

The decomposition
\begin{equation}\label{eq:vardecomp2}
\mathrm{Var}(R) =
\underbrace{\mathrm{E}[\mathrm{Var}(R \mid \boldsymbol{\theta})]}_{\text{process variance}}
+
\underbrace{\mathrm{Var}(\mathrm{E}[R \mid \boldsymbol{\theta}])}_{\text{estimation variance}}
\end{equation}
formalises this distinction. The plug-in distribution accounts only for the
first term. For typical triangle sizes, estimation variance is of the same
order as process variance and cannot be neglected. Predictive intervals based
solely on process variance will therefore undercover systematically: the true
outstanding counts will fall outside the nominal interval more often than
the stated level implies. The parametric bootstrap developed in
Section~\ref{sec:bootstrap} accounts for both terms simultaneously.

\subsection{Variance decomposition}

The total predictive variance decomposes as shown in
equation~\eqref{eq:vardecomp2} above, mirroring the structure of
\citet{mack1993} but arising here from a fully specified likelihood model
rather than second-moment assumptions alone.

\subsection{Parametric bootstrap}\label{sec:bootstrap}

To incorporate both process and estimation variance, we employ a parametric bootstrap procedure.

\subsubsection{Adjusted profile likelihood and bias correction}\label{sec:biascorrection}

Maximum likelihood estimation of $\kappa$ via the profile likelihood $\ell_p(\kappa)$ is subject to finite-sample bias. The profile likelihood treats the estimated means $\hat{\boldsymbol{\mu}}(\kappa)$ as though they were known, ignoring the information consumed in estimating the $p$-dimensional nuisance parameter $\boldsymbol{\lambda} = (\boldsymbol{\alpha}, \boldsymbol{\beta})$. For run-off triangles where $n = I(I+1)/2$ is modest relative to $p \approx 2I - 1$, this bias can be substantial. Empirically, $\hat{\kappa}_{\text{MLE}}$ tends to overestimate the true $\kappa$, leading to underestimation of variance and hence undercoverage of predictive intervals.

The appropriate correction is the Cox--Reid adjusted profile likelihood \citep{coxreid1987,barndorffnielsen1983}. For a model with scalar interest parameter $\kappa$ and vector nuisance parameter $\boldsymbol{\lambda}$, the adjusted profile likelihood is
\[
\ell_{AP}(\kappa) = \ell_p(\kappa) - \tfrac{1}{2}\log\det\bigl|j_{\boldsymbol{\lambda}\boldsymbol{\lambda}}(\hat{\boldsymbol{\lambda}}_\kappa)\bigr|,
\]
where $j_{\boldsymbol{\lambda}\boldsymbol{\lambda}}(\hat{\boldsymbol{\lambda}}_\kappa)$ is the observed information matrix for $\boldsymbol{\lambda}$ evaluated at the constrained MLE $\hat{\boldsymbol{\lambda}}_\kappa$ for fixed $\kappa$. The correction term $-\frac{1}{2}\log\det|j_{\boldsymbol{\lambda}\boldsymbol{\lambda}}|$ removes the $O(1)$ bias in $\ell_p(\kappa)$ arising from nuisance parameter estimation \citep{coxreid1987}.

For the NB--CL model, the observed information for $\boldsymbol{\lambda}$ at fixed $\kappa$ is
\[
j_{\boldsymbol{\lambda}\boldsymbol{\lambda},kl} = \sum_{i,j} \frac{w(\kappa,\mu_{ij})}{v(\kappa,\mu_{ij})} \cdot \frac{\partial\mu_{ij}}{\partial\lambda_k}\frac{\partial\mu_{ij}}{\partial\lambda_l},
\]
where $v(\kappa,\mu) = \mu + \mu^2/\kappa$ is the NB variance function and $w(\kappa,\mu)$ is the corresponding GLM working weight. The log-determinant of $j_{\boldsymbol{\lambda}\boldsymbol{\lambda}}$ may be written as $\sum_{k=1}^p \log \xi_k(\kappa)$, where $\xi_k(\kappa)$ are its eigenvalues. Differentiating $\ell_{AP}(\kappa)$ with respect to $\kappa$ and setting to zero yields the adjusted score equation
\begin{equation}\label{eq:adjscore}
S_p(\kappa) - \tfrac{1}{2}\frac{\partial}{\partial\kappa}\log\det\bigl|j_{\boldsymbol{\lambda}\boldsymbol{\lambda}}\bigr| = 0,
\end{equation}
where $S_p(\kappa) = \partial\ell_p/\partial\kappa$ is the profile score.

To obtain a tractable closed-form correction, we evaluate the adjustment under the approximation that the design is balanced and that fitted means are not negligible relative to $\kappa$. Under these conditions the NB working weights satisfy $\bar{w}(\kappa) = n^{-1}\sum_{ij}w_{ij}(\kappa) \approx \bar{\mu}\kappa/(\kappa + \bar{\mu})$, and the log-determinant scales as $\log\det|j_{\boldsymbol{\lambda}\boldsymbol{\lambda}}| \approx p\log\bar{w}(\kappa) + C$ where $C$ does not depend on $\kappa$. Differentiating and substituting into~\eqref{eq:adjscore} gives a correction to the profile score of order $p/(2\kappa)$. Solving, the leading-order adjustment to the MLE is
\begin{equation}\label{eq:biascorrection}
\hat{\kappa}_{\mathrm{adj}} = \hat{\kappa}_{\mathrm{MLE}} \cdot \frac{n - p}{n}.
\end{equation}
In the Poisson limit $\kappa \to \infty$, the NB variance function reduces to $v = \mu$, the profile score equation for $\kappa^{-1}$ becomes linear, and the correction factor $(n-p)/n$ is exact, coinciding with the REML correction for Gaussian variance components \citep{patterson1971} and with the degrees-of-freedom correction for the quasi-Poisson dispersion parameter \citep{mccullagh1989}. The correction is therefore not novel in itself; what is novel is its derivation as a Cox--Reid adjustment specific to the NB--CL likelihood.

\begin{remark}[Scope of the approximation]
The derivation above is exact in the Poisson limit and first-order correct for finite $\kappa$. For small $\kappa$ (high overdispersion), the approximation $\bar{w} \approx \bar{\mu}\kappa/(\kappa+\bar{\mu})$ is less accurate and the correction factor $(n-p)/n$ may understate the required adjustment. The simulation study of Section~8 demonstrates empirically that the corrected estimator achieves near-nominal coverage even at $\kappa = 2$, suggesting the approximation is adequate in practice. A fully numerical implementation via direct maximisation of $\ell_{AP}(\kappa)$ is straightforward and removes the approximation entirely; we adopt the closed-form correction for practical convenience.
\end{remark}

For a $10 \times 10$ triangle, $n = 55$ and $p = 19$ (one intercept, nine accident-year effects, nine development-year effects), giving a correction factor of approximately 0.65. Here $p$ counts only the mean (nuisance) parameters $\boldsymbol{\lambda}$; the interest parameter $\kappa$ is excluded from the Cox--Reid adjustment. The parameter count of Section~5.5, which includes $\kappa$, serves the different purpose of overfitting assessment. This correction shrinks $\hat{\kappa}$ toward smaller values, appropriately increasing the estimated variance. The simulation study in Section~8 demonstrates that this correction yields well-calibrated predictive intervals across a wide range of $\kappa$ values.

\subsubsection{Bootstrap algorithm}

\begin{algorithm}
\caption{Parametric Bootstrap for NB--CL with Bias-Corrected $\kappa$}
\label{alg:bootstrap}
\begin{algorithmic}[1]
\STATE Fit NB--CL to observed triangle $\mathcal{D}$; obtain $(\hat{\boldsymbol{\alpha}}, \hat{\boldsymbol{\beta}}, \hat{\kappa}_{\text{MLE}})$
\STATE Compute $\hat{\kappa}_{\text{adj}} = \hat{\kappa}_{\text{MLE}} \cdot (n-p)/n$
\FOR{$b = 1, \ldots, B$}
\STATE \textbf{Step 1: Generate synthetic observed triangle}
\FOR{each observed cell $(i,j)$ with $i + j \leq I$}
\STATE $N_{i,j}^{*(b)} \sim \text{NegBin}(\hat{\mu}_{i,j}, \hat{\kappa}_{\text{adj}})$
\ENDFOR
\STATE \textbf{Step 2: Re-estimate parameters}
\STATE Fit NB--CL to $\{N_{i,j}^{*(b)}\}$; obtain $(\hat{\boldsymbol{\alpha}}^{*(b)}, \hat{\boldsymbol{\beta}}^{*(b)}, \hat{\kappa}_{\text{MLE}}^{*(b)})$
\STATE Compute $\hat{\kappa}_{\text{adj}}^{*(b)} = \hat{\kappa}_{\text{MLE}}^{*(b)} \cdot (n-p)/n$
\STATE \textbf{Step 3: Simulate future cells}
\FOR{each future cell $(i,j)$ with $i + j > I$}
\STATE $N_{i,j}^{*(b)} \sim \text{NegBin}(\hat{\mu}_{i,j}^{*(b)}, \hat{\kappa}_{\text{adj}}^{*(b)})$
\ENDFOR
\STATE \textbf{Step 4: Compute total reserve}
\STATE $R^{(N),*(b)} = \sum_{i+j>I} N_{i,j}^{*(b)}$
\ENDFOR
\RETURN Bootstrap distribution $\{R^{(N),*(b)}\}_{b=1}^B$
\end{algorithmic}
\end{algorithm}

Predictive intervals are obtained as empirical quantiles of the bootstrap distribution. For example, the 95\% predictive interval is $[R^{(N), *(0.025)}, R^{(N), *(0.975)}]$.

\begin{remark}[Bootstrap design and parameter uncertainty]
Algorithm~\ref{alg:bootstrap} propagates estimation uncertainty in
$(\boldsymbol{\alpha}, \boldsymbol{\beta}, \kappa)$ by re-estimating all
parameters on each synthetic triangle, following the standard parametric
bootstrap for predictive intervals \citep{DavisonHinkley1997}. This
approach treats the observed triangle as one realisation of the fitted
model and asks how much reserve estimates would vary across realisations,
capturing both process variance and parameter estimation variance
simultaneously. A theoretically distinct procedure places explicit priors on 
$(\boldsymbol{\alpha}, \boldsymbol{\beta}, \kappa)$ and integrates the 
predictive distribution against the resulting posterior; this Bayesian 
alternative is discussed in Section~\ref{sec:extensions} and is the 
recommended approach for small triangles where the quadratic 
approximation underlying the bias correction is least accurate.
\end{remark}

\subsection{Comparison with the residual (ODP) bootstrap}

The parametric bootstrap described above differs from the residual bootstrap of \citet{england2002} in several ways. It does not rely on the ODP quasi-Poisson variance structure and does not resample Pearson residuals. Instead, it resamples from a fully specified negative binomial distribution and naturally extends to any distributional assumption, such as Tweedie for amounts.

Despite these differences, both approaches share the goal of quantifying
process and estimation variance. The NB--CL bootstrap achieves this with
fewer assumptions and a clearer probabilistic interpretation. A further
practical advantage is that tail quantiles of the reserve distribution, such as the 1-in-100 or 1-in-500 outcome relevant for capital setting, are
estimated by sampling from a fully specified parametric family rather than
from a finite pool of residuals. For a $10 \times 10$ triangle the residual
pool contains only 55 observations, making extreme-quantile estimates
from residual resampling unreliable; the parametric bootstrap does not
share this limitation.

\section{Relationship to Existing Models}

The NB--CL model provides a unifying framework that encompasses several well-known stochastic extensions of the Chain--Ladder method, though the nature of the relationship differs across models.

\subsection{Poisson Chain--Ladder}

The Poisson Chain--Ladder model assumes
\[
N_{i,j} \sim \text{Poisson}(\mu_{i,j}), \quad \log \mu_{i,j} = \alpha_i + \beta_j.
\]
This is a special case of the NB--CL model obtained by letting $\kappa \to \infty$. In this limit, $\text{Var}(N_{i,j}) = \mu_{i,j}$, and the negative binomial reduces to the Poisson.

\subsection{Over-Dispersed Poisson (ODP) Chain--Ladder}

The ODP model assumes
\[
\operatorname{Var}(N_{i,j}) = \phi\, \mu_{i,j},
\]
where $\phi > 1$ is an overdispersion parameter. This variance structure 
is motivated by quasi-likelihood theory rather than a fully specified 
probability distribution.

The relationship between the ODP model and Mack's distribution-free 
Chain--Ladder (DFCL) model has been the subject of extended methodological 
debate. \citet{mackventer2000} provide a detailed comparison and show 
that, although both models reproduce the Chain--Ladder point estimates, 
they differ structurally in their independence assumptions, in the fitted 
values they imply within the observed triangle, and in their behaviour 
when the data structure deviates from a strict triangle (e.g., trapezoidal 
data, missing cells, or alternative weighting of development factors). 
They conclude that only the DFCL model qualifies as the stochastic model 
underlying the Chain--Ladder algorithm. The NB--CL model offers a third 
route that is not subject to their critique: rather than selecting between 
two ad hoc working models on the basis of which are considered closer 
to the deterministic algorithm, it derives the Chain--Ladder structure 
from a coherent micro-level process. The DFCL/ODP debate becomes a 
question of which approximation to the NB--CL likelihood is preferred, 
not which model truly underlies the algorithm.

The technical content of this resolution lies in the variance functions. 
The NB2 parameterisation adopted here implies a quadratic variance-mean 
relationship,
\[
\operatorname{Var}(N_{i,j}) = \mu_{i,j} + \frac{\mu_{i,j}^2}{\kappa},
\]
whereas the ODP model assumes a linear relationship 
$\operatorname{Var}(N_{i,j}) = \phi\,\mu_{i,j}$. These two variance 
structures are structurally distinct: the ODP variance function is not 
a special case of the NB2 variance function for any value of $\kappa$. 
The connection between them is nonetheless precise. Rewriting the NB2 
variance as
\[
\operatorname{Var}(N_{i,j}) = \mu_{i,j}\!\left(1 + \frac{\mu_{i,j}}{\kappa}\right)
\]
shows that, if cell means are approximately equal to some portfolio average $\bar{\mu}$, the NB2 variance coincides with the ODP variance under the identification $\phi = 1 + \bar{\mu}/\kappa$. This is a local approximation valid when the range of $\mu_{i,j}$ across cells is narrow relative to $\kappa$; it breaks down in cells with large counts where the quadratic term dominates.

The ODP model can also be embedded in a proper likelihood framework through the NB1 parameterisation, under which $\operatorname{Var}(N_{i,j}) = \mu_{i,j}(1+\alpha)$ for some $\alpha > 0$, yielding a linear variance-mean relationship that exactly matches ODP. The NB2 formulation adopted in the present paper implies a quadratic relationship instead, which is more appropriate when overdispersion increases with the magnitude of expected counts, a pattern commonly observed in cells with large exposure. The choice between NB1 and NB2 is an empirical question assessable via residual diagnostics or likelihood-based model comparison. 

\subsection{Negative-Binomial model of Verrall (2000)}

A closely related but structurally distinct negative binomial 
model was proposed by \citet{verrall2000} and is presented in 
\citet{wuthrich2008} as a conditional time-series formulation. 
That model specifies:
\begin{equation}
X_{i,j} \mid C_{i,j-1} \;\sim\; 
\mathrm{NB1}\!\left(C_{i,j-1}(f_{j-1}-1),\; f_{j-1}\right),
\label{eq:verrall_nb}
\end{equation}
where $\mathrm{NB1}(m, \delta)$ denotes a negative binomial distribution 
parameterised by its mean $m$ and its variance-to-mean ratio $\delta$, so 
that the variance is \emph{linear} in the mean. This is deliberately 
distinct from the mean-dispersion (NB2) convention 
$\mathrm{NegBin}(\mu, \kappa)$ used throughout this paper, under which the 
variance is quadratic in the mean; the Verrall model cannot be written in 
the NB2 notation. Its first two moments are:
\begin{align}
\mathrm{E}[X_{i,j} \mid C_{i,j-1}] &= C_{i,j-1}(f_{j-1}-1), \\
\mathrm{Var}(X_{i,j} \mid C_{i,j-1}) &= 
C_{i,j-1}(f_{j-1}-1)f_{j-1}.
\end{align}
This differs from the NB--CL model in two structural respects.

First, the dispersion parameter in \eqref{eq:verrall_nb} is 
$f_{j-1}$, which varies by development year. The NB--CL model 
has a single constant dispersion parameter $\kappa$ across all 
cells, reflecting the assumption of a homogeneous cell-level shock distribution: $\kappa$ is a property of the residual claim-generating environment, not of the development pattern. The Verrall (2000) model has no such 
micro-level justification for why the dispersion should equal 
the development factor: the Verrall model is thus an NB1-type specification (cf.\ 
Section~7.2), with a development-year-specific variance-to-mean ratio.

Second, the Verrall (2000) model is formulated conditionally 
on the cumulative $C_{i,j-1}$, making it a time-series model 
in the spirit of the Mack assumptions. The NB--CL model is 
formulated unconditionally at the cell level, with the 
multiplicative mean structure $\log\mu_{i,j} = \alpha_i + 
\beta_j$ arising from the Poisson--Gamma micro-derivation 
rather than from recursive development.

Both models are consistent with the Chain--Ladder assumptions 
\citep{mack1993}. The \citet{verrall2000} model reproduces the 
Chain--Ladder point estimates; the NB--CL estimates coincide with 
them in the Poisson limit and approximate them closely for finite 
$\kappa$ (Section~8.2). The models further differ in their variance 
functions, in the interpretation of the dispersion parameter, and 
in whether the dispersion is constant or development-varying. The NB--CL model is 
preferable when the goal is a structural interpretation of 
overdispersion as cell-level shock variability; the Verrall 
(2000) model may be preferable when the development pattern 
itself is believed to drive the overdispersion.

\subsection{Mack model}

The Mack model \citep{mack1993} is formulated conditionally on cumulative claims:
\[
\E[C_{i,j+1} \mid C_{i,j}] = f_j\, C_{i,j}, \qquad
\mathrm{Var}(C_{i,j+1} \mid C_{i,j}) = \sigma_j^2\, C_{i,j},
\]
with independence across accident years. It specifies only these conditional first and second moments, assumes no full probability distribution, and obtains the mean squared error of prediction analytically rather than by resampling.

The NB--CL model differs in three respects. First, it specifies a full likelihood rather than conditional moments. Second, its second-moment structure is unconditional, constant-dispersion, and quadratic in the cell mean, whereas Mack's is conditional, column-specific, and linear in the cumulative; the two specifications are non-nested, as \citet{mackventer2000} emphasise in the parallel ODP/DFCL comparison. Third, prediction uncertainty in the NB--CL model follows from the predictive distribution itself (Section~6), not from separate variance formulae.

The Mack model is therefore not a moment truncation of the NB--CL model; it is an alternative, non-nested second-moment specification that shares the Chain--Ladder point estimates.

\subsection{Dirichlet--multinomial Chain--Ladder}

The Dirichlet--multinomial Chain--Ladder model places a Dirichlet prior 
on the multinomial probabilities governing the development pattern, 
with a concentration parameter $c$ controlling the variability of the 
development pattern across accident years. \citet{srirams2021} develop 
this framework to unify the Chain--Ladder and Bornhuetter--Ferguson 
methods within a single Bayesian model: CL emerges when conditioning 
on observed row totals, while BF emerges when integrating out the 
ultimates under a strong prior. Marginalising over the Dirichlet prior 
yields a negative binomial distribution for aggregated counts, 
paralleling the Poisson--Gamma route taken in this paper.

The NB--CL model and the Dirichlet--multinomial framework capture 
different sources of heterogeneity and are therefore complementary 
rather than equivalent. The NB--CL dispersion parameter $\kappa$ captures cell-level shock variability around the multiplicative structure (arrival-rate heterogeneity across accident years being absorbed by the free $\alpha_i$; Remark~\ref{rem:absorption}). The Dirichlet 
concentration parameter $c$ captures variability in the development 
pattern itself across accident years, holding the arrival rate fixed. 
Both lead to negative binomial marginals for incremental counts but 
the underlying generative mechanisms are distinct, and a model that 
incorporates both layers simultaneously would extend the present 
framework. The NB--CL model is formulated directly at the incremental 
level and is naturally expressed as a GLM with log link, which 
facilitates standard estimation; the Dirichlet--multinomial formulation 
is more naturally Bayesian.

\subsection{Hierarchical Bayesian Chain--Ladder}\label{sec:taylor}

\citet{taylor2015} surveys Bayesian formulations of the Chain--Ladder
method and develops a unifying framework in which error terms belong to
the exponential dispersion family, with over-dispersed Poisson and
Tweedie errors arising as special cases. Both the Mack and the
cross-classified forms of the Chain--Ladder are treated. In the
cross-classified case, the framework admits priors on row, column or
diagonal parameters; in the Mack case, the Bayes, linear Bayes
(credibility) and MAP estimators are shown to coincide. Estimation
proceeds via MCMC.

The NB--CL model and Taylor's framework are complementary rather than
competing, since both embed the Chain--Ladder within a probabilistic
hierarchy that accommodates accident-year heterogeneity. They differ,
however, in three respects. First, the NB--CL model is derived from a
micro-level Poisson--Gamma construction (Section~\ref{sec:micro}) that
gives the dispersion parameter $\kappa$ a structural interpretation as
the inverse variance of the cell-level shocks; Taylor's framework
specifies the hierarchy directly at the macro level without micro-level
justification. Second, the NB--CL model admits estimation via standard
GLM software (\texttt{MASS::glm.nb} in R), requiring no MCMC
implementation and converging in seconds even for large triangles.
Third, the NB--CL model preserves the classical Chain--Ladder point
estimates exactly in the Poisson limit $\kappa \to \infty$, hereby
providing a clean nesting structure that connects the stochastic and
deterministic versions of the method.

The Bayesian approach of \citet{taylor2015} is preferable when prior
information is available or when the triangle is too sparse for reliable
maximum likelihood estimation. The NB--CL model is better suited as a
default frequentist framework; Section~\ref{sec:extensions} sketches the
natural Bayesian extension.

\subsection{Model hierarchy}

Figure~\ref{fig:hierarchy} illustrates the relationships among Chain--Ladder reserving models. The NB--CL model is directly linked to micro-level Poisson--Gamma processes, induces a negative binomial distribution at the macro level, and admits a GLM formulation. The Poisson CL model is an exact special case ($\kappa \to \infty$). The Mack model is a non-nested second-moment specification that shares the Chain--Ladder point estimates but conditions on cumulatives with column-specific variance parameters (Section~7.4). The ODP CL model is structurally distinct from NB--CL in its variance function but shares the same mean structure, with point estimates coinciding in the Poisson limit; the dashed arrow in Figure~\ref{fig:hierarchy} denotes a local approximation between the two variance functions, valid when cell means are approximately homogeneous (Section~7.2).

\begin{figure}[!htbp]
\centering
\begin{tikzpicture}[
    node distance=1.5cm and 2cm,
    every node/.style={rectangle, draw, text centered, minimum height=1cm, minimum width=2.5cm, font=\small},
    arrow/.style={-{Stealth[length=3mm]}, thick},
    dasharrow/.style={-{Stealth[length=3mm]}, thick, dashed}
]
\node (micro) {Micro-level Poisson--Gamma};
\node (nbcl) [below=of micro] {NB--CL: full likelihood};
\node (pois) [below left=1.5cm and 1cm of nbcl] {Poisson CL};
\node (odp) [below=of nbcl] {ODP CL};
\node (mack) [below right=1.5cm and 1cm of nbcl] {Mack};
\node (det) [below=2cm of odp] {Deterministic CL};

\draw[arrow] (micro) -- node[right, draw=none, minimum height=0, minimum width=0] {\scriptsize aggregation} (nbcl);
\draw[arrow] (nbcl) -- node[left, draw=none, minimum height=0, minimum width=0] {\scriptsize $\kappa \to \infty$} (pois);
\draw[dasharrow] (nbcl) -- node[right, draw=none, minimum height=0, minimum width=0] {\scriptsize local approx.} (odp);
\draw[dasharrow] (nbcl) -- node[right, draw=none, minimum height=0, minimum width=0] {\scriptsize shared point est.} (mack);
\draw[arrow] (pois) -- (det);
\draw[arrow] (odp) -- (det);
\draw[arrow] (mack) -- (det);
\end{tikzpicture}
\caption{Relationships among Chain--Ladder reserving models. Solid arrows denote exact special cases; dashed arrows denote non-nested relationships: a local approximation between the NB2 and ODP variance functions (valid when cell means are approximately homogeneous), and shared point estimates but structurally different, conditional second moments in the Mack case. Neither ODP CL nor the Mack model is a special case of NB--CL. See Sections~7.2 and~7.4 for details.}
\label{fig:hierarchy}
\end{figure}

\section{Simulation Study}
\label{sec:simulation}

Three questions are addressed by simulation: the size of the bias correction for $\kappa$ developed in Section~\ref{sec:biascorrection}, the calibration of NB--CL predictive intervals against Poisson CL and
ODP CL under correct specification, and the robustness of the method under three forms of misspecification.

\subsection{Simulation design}

We generate synthetic run-off triangles directly from the NB--CL model to evaluate coverage properties under correct specification.

\subsubsection{Data-generating process}

For each cell $(i,j)$ in the triangle:
\[
  N_{i,j} \sim \operatorname{NegBin}(\mu_{i,j},\, \kappa^{\mathrm{true}}),
  \qquad
  \log \mu_{i,j} = \alpha_i^{\mathrm{true}} + \beta_j^{\mathrm{true}}.
\]
The observed triangle consists of cells with $i + j \leq I$; the remaining cells constitute the future development, and the true outstanding counts are
\[
  R_{\cdot}^{(N),\mathrm{true}} = \sum_{i+j > I} N_{i,j}.
\]

\subsubsection{Parameter settings}

Triangles of size $I = J = 10$ are generated, yielding 55 observed cells and 45 future cells. Accident-year effects are set as $\alpha_i^{\mathrm{true}} = \log\bigl(1000 + 500\,(i-1)/9\bigr)$, representing gradual exposure growth across accident years. The development pattern is $(w_0, \ldots, w_9) \propto (0.53, 0.24, 0.12, 0.05, 0.03, 0.015, 0.005, 0.003, 0.001, 0.001)$, normalised to sum to one; development effects are $\beta_j^{\mathrm{true}} = \log w_j - \log w_0$ so that $\beta_0^{\mathrm{true}} = 0$. For the $I = 7$ triangles of Panel~A the pattern is $(w_0, \ldots, w_6) \propto (0.53, 0.24, 0.12, 0.05, 0.03, 0.02, 0.01)$ with $\alpha_i^{\mathrm{true}} = \log\!\bigl(1000 + 500(i-1)/6\bigr)$, the remaining settings as above. The dispersion parameter takes values
\[
  \kappa^{\mathrm{true}} \in \{2,\, 3,\, 5,\, 10,\, 20,\, 50\},
\]
spanning extreme to mild overdispersion. For each setting, $n_{\mathrm{sim}} = 500$ replications are performed, with $B = 500$ bootstrap samples per replication.

\subsubsection{Methods compared}

We compare four methods that share the same Chain--Ladder predictor structure but differ in their distributional assumptions. \emph{Poisson CL} fits a GLM with Poisson family and therefore ignores any overdispersion present in the data. \emph{ODP CL} fits a GLM with quasi-Poisson family, estimating the overdispersion parameter~$\phi$ from the Pearson residuals; this is the standard approach underlying the ODP bootstrap of \citet{england2002}. \emph{NB--CL (MLE)} fits a GLM with negative binomial family and uses the maximum likelihood estimate~$\hat{\kappa}_{\mathrm{MLE}}$ directly in the bootstrap. Finally, \emph{NB--CL (corrected)} fits the same negative binomial GLM but applies the bias correction~\eqref{eq:biascorrection} to obtain an adjusted~$\hat{\kappa}$ before bootstrapping.

All four methods use the parametric bootstrap described in Algorithm~\ref{alg:bootstrap} to construct predictive intervals, ensuring a fair comparison. The only difference between the two NB--CL variants is whether the degrees-of-freedom adjustment is applied to~$\hat{\kappa}$; comparing them isolates the effect of the bias correction proposed in Section~\ref{sec:biascorrection}.

\subsubsection{Performance metrics}

We assess each method along four dimensions. The first two measure the quality of the point estimate, while the latter two evaluate the predictive intervals.

\emph{Bias} measures the systematic deviation of the point estimate from the true outstanding counts, averaged over all simulations:
\[
  \mathrm{Bias}
  = n_{\mathrm{sim}}^{-1} \sum_{s=1}^{n_{\mathrm{sim}}}
    \bigl(\hat{R}_{\cdot}^{(N),s} - R_{\cdot}^{(N),\mathrm{true},s}\bigr).
\]
\emph{Root mean squared error} (RMSE) captures both bias and variability in a single measure of overall accuracy:
\[
  \mathrm{RMSE}
  = \Biggl(n_{\mathrm{sim}}^{-1} \sum_{s=1}^{n_{\mathrm{sim}}}
    \bigl(\hat{R}_{\cdot}^{(N),s} - R_{\cdot}^{(N),\mathrm{true},s}\bigr)^2\Biggr)^{1/2}.
\]
\emph{Coverage} is the proportion of simulations in which the true outstanding counts fall within the predictive interval. Under correct calibration, the empirical coverage should match the nominal level; systematic undercoverage signals that the predictive distribution is too narrow, while overcoverage indicates unnecessary conservatism. \emph{Interval width} is the average width of the predictive intervals and serves as a complementary diagnostic: among methods that achieve the nominal coverage, narrower intervals indicate sharper inference. Coverage is evaluated at nominal levels of 75\% and 95\%.

\subsection{Results under correct specification}

\begin{table}[!htbp]
\centering
\caption{Simulation results under correct specification 
($n_{\mathrm{sim}} = 500$, $B = 250$ for $I = 7$; 
$n_{\mathrm{sim}} = 500$, $B = 500$ for $I = 10$).
For $I = 7$, only NB--CL (corrected) is reported as the behaviour 
of Poisson CL and ODP CL is qualitatively identical to the 
$I = 10$ results.}
\label{tab:simulation}
\small
\begin{tabular}{clrrrrrr}
\toprule
$\kappa^{\text{true}}$ & Method & Bias & RMSE & Cov.\ 75\% & Cov.\ 95\% & Width 75\% & Width 95\% \\
\multicolumn{8}{l}{\textit{Panel A: $I = 7$, NB--CL (corrected) 
only, $n_{\mathrm{sim}} = 500$, $B = 250$}} \\
\midrule
2  & NB--CL (corrected) &  99 &  871 & 0.73 & 0.93 & 1911 & 3927 \\
3  & NB--CL (corrected) &  39 &  614 & 0.74 & 0.93 & 1394 & 2674 \\
5  & NB--CL (corrected) &   1 &  479 & 0.72 & 0.89 & 1026 & 1869 \\
10 & NB--CL (corrected) &   8 &  324 & 0.70 & 0.93 &  701 & 1219 \\
20 & NB--CL (corrected) &  -2 &  235 & 0.66 & 0.91 &  479 &  821 \\
50 & NB--CL (corrected) &   2 &  154 & 0.63 & 0.88 &  296 &  502 \\
\midrule
\multicolumn{8}{l}{\textit{Panel B: $I = 10$, $n_{\mathrm{sim}} = 
500$, $B = 500$}} \\
\midrule
\multirow{4}{*}{2}
& Poisson CL & 89 & 753 & 0.08 & 0.13 & 134 & 227 \\
& ODP CL & 89 & 753 & 0.57 & 0.82 & 1144 & 2003 \\
& NB--CL (MLE) & 99 & 762 & 0.71 & 0.94 & 1615 & 3156 \\
& NB--CL (corrected) & 99 & 762 & 0.74 & 0.95 & 1666 & 3370 \\
\midrule
\multirow{4}{*}{3}
& Poisson CL & 39 & 559 & 0.08 & 0.15 & 132 & 223 \\
& ODP CL & 39 & 559 & 0.61 & 0.83 & 918 & 1593 \\
& NB--CL (MLE) & 48 & 565 & 0.75 & 0.93 & 1282 & 2385 \\
& NB--CL (corrected) & 48 & 565 & 0.77 & 0.94 & 1322 & 2506 \\
\midrule
\multirow{4}{*}{5}
& Poisson CL & 7 & 415 & 0.12 & 0.20 & 129 & 218 \\
& ODP CL & 7 & 415 & 0.59 & 0.84 & 707 & 1212 \\
& NB--CL (MLE) & 5 & 415 & 0.71 & 0.93 & 933 & 1677 \\
& NB--CL (corrected) & 5 & 415 & 0.74 & 0.94 & 971 & 1755 \\
\midrule
\multirow{4}{*}{10}
& Poisson CL & 13 & 312 & 0.15 & 0.26 & 129 & 220 \\
& ODP CL & 13 & 312 & 0.58 & 0.86 & 509 & 874 \\
& NB--CL (MLE) & 16 & 312 & 0.71 & 0.91 & 659 & 1149 \\
& NB--CL (corrected) & 16 & 312 & 0.74 & 0.91 & 683 & 1194 \\
\midrule
\multirow{4}{*}{20}
& Poisson CL & $-2$ & 214 & 0.24 & 0.38 & 128 & 217 \\
& ODP CL & $-2$ & 214 & 0.60 & 0.88 & 367 & 625 \\
& NB--CL (MLE) & $-1$ & 214 & 0.70 & 0.92 & 447 & 765 \\
& NB--CL (corrected) & $-1$ & 214 & 0.71 & 0.92 & 462 & 795 \\
\midrule
\multirow{4}{*}{50}
& Poisson CL & $-1$ & 143 & 0.33 & 0.55 & 127 & 216 \\
& ODP CL & $-1$ & 143 & 0.63 & 0.85 & 253 & 429 \\
& NB--CL (MLE) & 0 & 142 & 0.66 & 0.88 & 278 & 476 \\
& NB--CL (corrected) & 0 & 142 & 0.67 & 0.89 & 286 & 487 \\
\bottomrule
\end{tabular}
\end{table}

Table~\ref{tab:simulation} summarises the results across all settings. 
Poisson CL severely undercovers, with 95\% coverage rates of 13--55\%,
confirming that ignoring overdispersion leads to grossly inadequate
uncertainty quantification. The England residual bootstrap for ODP CL
achieves 82--88\% at the 95\% level, a substantial improvement over
Poisson but consistently below nominal. NB--CL with the naive MLE
performs better than ODP (88--94\%) but remains just below the corrected
variant at every setting, since the upward bias in
$\hat{\kappa}_{\text{MLE}}$ narrows the intervals.

NB--CL with the bias-corrected $\kappa$ achieves near-nominal 95\%
coverage across all settings, with rates of 89--95\%. Its advantage over
the ODP residual bootstrap is one of calibration rather than sharpness:
at the 95\% level NB--CL covers 4 to 13 percentage
points closer to nominal, at the cost of somewhat wider intervals at low
$\kappa$, the two methods converging in width as overdispersion weakens. The case for NB--CL over ODP therefore does not rest on narrower
or wider intervals, but on the fact that this near-nominal coverage
follows from a coherent full likelihood with a structurally interpretable
dispersion parameter, with the bias correction resolving a genuine
finite-sample estimation problem. At the 75\% level, coverage ranges from
67--77\%, reflecting some shape mismatch between the negative binomial
predictive distribution and the distribution of the total reserve in the
body. Poisson CL and ODP CL share identical point estimates, since both solve
the same quasi-score equations. The NB--CL point estimates differ
slightly for finite $\kappa$, because the NB2 working weights depart from
the Poisson weights (compare the Bias columns); the differences are small
relative to RMSE in all settings. The material differences between the
methods lie in uncertainty quantification.

The bias correction, a single multiplicative factor $(n-p)/n$ grounded in the adjusted profile likelihood framework, is simple to implement and yields the best overall performance: nominal 95\% coverage with intervals appropriately sized for the degree of overdispersion. For mild overdispersion (large $\kappa$), the methods converge, as expected.

Table~\ref{tab:simulation} also reports results for $I = 7$ triangles 
(Panel~A), matching the size of the Australian motor bodily injury 
illustration ($n_{\mathrm{sim}} = 500$, $B = 250$, NB--CL corrected 
only). The bias correction performs adequately but not perfectly at this 
triangle size: 95\% coverage ranges from 0.88 to 0.93 across the panel, 
modestly below nominal throughout, with no clear monotone pattern in 
$\kappa$ (differences of this size are within roughly two Monte Carlo 
standard errors at $n_{\mathrm{sim}} = 500$). The shortfall relative to 
the $I = 10$ results reflects the small-triangle regime: with $n = 28$ 
cells and $p = 13$ mean parameters the adjustment factor 
$(n-p)/n \approx 0.54$ is large, and the quadratic approximation 
underlying the Cox--Reid correction is least accurate. For the 
Australian application, where the corrected estimate is 
$\hat{\kappa}_{\mathrm{adj}} = 2.6$, the nearest panel entries 
($\kappa \in \{2, 3\}$, coverage 0.93) suggest that mild undercoverage 
of one to two percentage points at the 95\% level should be expected. 
The Bayesian formulation discussed in Section~\ref{sec:extensions} 
would resolve this finite-sample gap by propagating parameter 
uncertainty through the posterior rather than approximating it via a 
profile likelihood correction.

\subsection{Robustness under model misspecification}\label{sec:misspec}

Three misspecification scenarios are evaluated below using the NB--CL (corrected) method throughout.

\subsubsection{Scenario A: Poisson DGP}

When the true DGP is Poisson ($\kappa^{\text{true}} = \infty$), the NB--CL model is overparameterised. The key question is whether it gracefully recovers $\hat{\kappa} \to \infty$ and avoids distorting inference.

Table~\ref{tab:misspec} (Scenario~A) shows that the estimated $\hat{\kappa}$ exceeded $10^6$ in all replications, confirming that the profile likelihood correctly identifies the absence of overdispersion. Coverage is 72.5\% at the 75\% level and 96.0\% at the 95\% level---essentially nominal at both levels. The RMSE of 75 (compared to 503 at $\kappa = 10$) reflects the lower inherent variability of the Poisson DGP. The NB--CL model does not degrade when overdispersion is absent: the extra parameter is simply estimated to be large and is effectively inert.

\subsubsection{Scenario B: Unmodelled calendar-year effects}

Calendar-year effects, such as claims inflation, regulatory changes, or operational shifts, are a common source of misspecification for the log-additive model. We introduce a 5\% annual multiplicative inflation factor along diagonals of the triangle, with $\kappa^{\text{true}} = 10$.

Table~\ref{tab:misspec} (Scenario~B) shows that unmodelled calendar-year effects inflate $\hat{\kappa}$ to 16.9 (compared to the true value of 10), as the model partially absorbs the diagonal inflation into the accident-year effects. The reserve bias is negligible (2), because the log-additive structure partially accommodates constant-rate inflation through the accident-year effects. Coverage, however, drops to 71\% at the 75\% level and 92\% at the 95\% level, reflecting the model's inability to fully capture the systematic inflation component. Hence, calendar-year effects should be tested for whenever $\hat{\kappa}$ deviates from expectations; adding a calendar-year term $\gamma_{i+j}$ to the linear predictor (Section~\ref{sec:extensions}) would address this misspecification.

\subsubsection{Scenario C: Development-varying dispersion}

The NB--CL model assumes a single dispersion parameter $\kappa$ for all cells. In practice, overdispersion may be more pronounced in late development years than in early ones. We generate data with $\kappa$ decreasing from 20 in early development years to 3 in late development years: $\kappa_j \in \{20, 18, 15, 12, 10, 7, 5, 4, 3, 3\}$.

Table~\ref{tab:misspec} (Scenario~C) shows that the NB--CL model estimates a single $\hat{\kappa} = 25.5$ on average. The estimate is dominated by the data-rich, high-mean early development columns, where the true $\kappa_j$ is largest and where the likelihood carries most of the information about the dispersion; the finite-sample upward bias of the dispersion MLE (Section~\ref{sec:biascorrection}) pushes the average above even the largest true value $\kappa_j = 20$. Despite this, the 95\% predictive interval attains 95\% coverage. The constant-$\kappa$ assumption is the safest misspecification of the three, because the estimator is dominated by the data-rich early columns while prediction uncertainty is dominated by the data-sparse late columns where the true $\kappa$ is lower. The bias correction effectively compensates for this mismatch.

\begin{table}[!htbp]
\centering
\caption{Simulation results under model misspecification ($I = 10$, $n_{\text{sim}} = 200$, $B = 500$). NB--CL (corrected) is used throughout. The ``Mean $\hat{\kappa}$'' column reports the uncorrected MLE; the bias-corrected value used for the intervals is $\hat{\kappa}_{\mathrm{adj}} = \hat{\kappa}\,(n-p)/n$, which for the reference row averages $\approx 10.8$, close to the true $\kappa = 10$ and explaining the near-nominal coverage.}
\label{tab:misspec}
\begin{tabular}{llrrrrr}
\toprule
Scenario & True DGP & RMSE & Mean $\hat{\kappa}$ & Cov.\ 75\% & Cov.\ 95\% \\
\midrule
A & Poisson ($\kappa = \infty$) & 75 & ${>}10^6$ & 0.73 & 0.96 \\
B & NB ($\kappa = 10$) + calendar (5\%/yr) & 1006 & 16.9 & 0.71 & 0.92 \\
C & NB, $\kappa$ varies by DY (20$\to$3) & 445 & 25.5 & 0.78 & 0.95 \\
\midrule
\multicolumn{6}{l}{\textit{Reference: correct specification ($\kappa = 10$)}} \\
--- & NB ($\kappa = 10$) & 503 & 16.5 & 0.80 & 0.96 \\
\bottomrule
\end{tabular}
\end{table}

\section{Empirical Illustrations}\label{sec:empirical}

The NB--CL model is illustrated on two datasets. The first is a claim count triangle from Australian motor bodily injury insurance, where the micro-level derivation of Section~\ref{sec:micro} applies directly. The second is the Taylor--Ashe paid amounts triangle, included purely as a numerical benchmark; the micro-level derivation does not apply to paid
amounts.

\subsection{Australian motor bodily injury: claim counts}\label{sec:ausbi}

\subsubsection{Data}

The Australian motor bodily injury dataset \citep{charpentier2015} comprises 22{,}036 individual claims from accident years 1993--1999. From these individual records, we construct an incremental claim count triangle with $I = 7$ accident years and $J = 7$ development years, where development year $j$ is defined as the year of claim finalisation minus the accident year. Table~\ref{tab:aus_tri} presents the observed triangle.\footnote{Accident year 1999 contains only 2 claims in development year~0, reflecting the partial final year of the extraction. The cell is retained in the fit; its accident-year effect is estimated from this single observation, which drives the extreme coefficient of variation for 1999 in Table~\ref{tab:aus_reserves}. Refitting without accident year~1999 leaves the dispersion estimate essentially unchanged ($\hat{\kappa}$ moves from 4.8 to 4.6), so the sparse final row does not drive the reported overdispersion.}

\begin{table}[!htbp]
\centering
\caption{Australian motor bodily injury: incremental claim counts}
\label{tab:aus_tri}
\begin{tabular}{lrrrrrrr}
\toprule
AY & DY 0 & DY 1 & DY 2 & DY 3 & DY 4 & DY 5 & DY 6 \\
\midrule
1993 & 220 & 855 & 744 & 414 & 387 & 304 & 44 \\
1994 & 320 & 1{,}133 & 839 & 671 & 480 & 73 & --- \\
1995 & 400 & 1{,}146 & 1{,}141 & 917 & 145 & --- & --- \\
1996 & 347 & 1{,}377 & 1{,}343 & 185 & --- & --- & --- \\
1997 & 357 & 1{,}914 & 299 & --- & --- & --- & --- \\
1998 & 485 & 280 & --- & --- & --- & --- & --- \\
1999 & 2 & --- & --- & --- & --- & --- & --- \\
\bottomrule
\end{tabular}
\end{table}

Cell counts range from 2 to 1{,}914, and the data are genuine integer-valued claim counts. Unlike the Taylor--Ashe paid amounts data, this triangle directly satisfies the micro-level assumptions of Section~4, making the Poisson-mixture micro-level reading of $\kappa$ fully valid at the cell level. The development pattern exhibits a documented structural drift. Measured on a common basis as the first-development-year share within the first two development years, $N_{i,0}/(N_{i,0}+N_{i,1})$ rises from $220/1{,}075 = 20.5\%$ in 1993 to $485/765 = 63.4\%$ in 1998. (The full-row share for 1993, $220/2{,}968 = 7.4\%$, is not comparable to the two-cell 1998 figure.) This drift is structured, across-row non-stationarity rather than i.i.d.\ cell noise. The constant-$\kappa$ model cannot represent it directly, so $\hat\kappa$ absorbs it as residual overdispersion; the empirical $\hat\kappa = 4.8$ therefore conflates genuine cell-level dispersion with this drift, which is exactly why the by-development-year residuals (Figure~\ref{fig:aus_residuals}) show a pattern and motivate the time-varying-$\beta$ extension.

\subsubsection{Model fitting}

The NB--CL model is fitted using \texttt{MASS::glm.nb} with $n = 28$ observed cells and $p = 13$ parameters (intercept, 6 accident-year effects, 6 development-year effects). The estimated dispersion parameter is $\hat{\kappa}_{\text{MLE}} = 4.8$ with a profile likelihood 95\% confidence interval of $[2.8, 7.7]$, indicating substantial overdispersion. The bias-corrected estimate is $\hat{\kappa}_{\text{adj}} = 4.8 \times (28 - 13)/28 = 2.6$. That $\hat{\kappa}_{\text{adj}}$ lies below the profile-likelihood interval is not a contradiction: the interval quantifies sampling uncertainty around the \emph{uncorrected} MLE, whose finite-sample upward bias is precisely what the correction removes.

The likelihood ratio test for overdispersion yields $\Lambda = 2{,}550.1$ with $p < 10^{-20}$, overwhelmingly rejecting the Poisson model. The AIC improvement equals $\Delta\text{AIC} = 2{,}548$, hence the overdispersion parameter is clearly needed.

The dispersion parameter $\hat{\kappa} = 4.8$ has a direct structural interpretation: the variance of the cell-level shocks is $1/\hat{\kappa} = 0.21$, a coefficient of variation of $46\%$ around the fitted multiplicative structure, consistent with the documented instability of the reporting pattern in this portfolio. Under the simplex parameterisation of Section~\ref{sec:identifiability}, the estimated accident-year totals $\hat{\mu}_i = \exp(\hat{\alpha}_i)$ provide a direct readout of the implied ultimate claim count for each accident year, and the development weights $\hat{w}_j = \exp(\hat{\beta}_j)$ sum to unity by construction.

\subsubsection{Reserve estimates}

Table~\ref{tab:aus_reserves} presents accident-year reserve estimates with 95\% prediction intervals from the NB--CL (corrected) bootstrap with $B = 5{,}000$ replications.

\begin{table}[!htbp]
\centering
\caption{Accident-year reserve estimates for Australian motor bodily injury claim counts. NB--CL (corrected) bootstrap with $B = 5{,}000$.}
\label{tab:aus_reserves}
\begin{tabular}{lrrrr}
\toprule
Accident year & CL point estimate & 95\% PI lower & 95\% PI upper & CV (\%) \\
\midrule
1994 & 53 & 2 & 198 & 116.5 \\
1995 & 293 & 44 & 929 & 82.7 \\
1996 & 657 & 128 & 1{,}770 & 69.7 \\
1997 & 1{,}205 & 282 & 3{,}248 & 67.3 \\
1998 & 966 & 342 & 4{,}362 & 67.8 \\
1999 & 17 & 0 & 81 & 121.6 \\
\midrule
Total & 3{,}191 & 1{,}563 & 7{,}785 & 42.5 \\
\bottomrule
\end{tabular}
\end{table}

The coefficient of variation increases for more recent accident years, where a larger proportion of development remains outstanding; it is highest for accident year 1999, which has only one observed cell (2 claims in DY~0). The prediction intervals are highly asymmetric: the upper bound for the total reserve ($7{,}785$) is 2.4 times the point estimate, while the lower bound ($1{,}563$) is 0.5 times, reflecting the right-skewed nature of the negative binomial predictive distribution. This asymmetry would be missed by normal approximations or delta-method intervals.

The total reserve CV of 42.5\% is substantial, driven largely by the high cell-level dispersion ($\hat{\kappa} = 4.8$) around the fitted development structure and the non-stationary development pattern. For comparison, Mack's distribution-free model yields a total reserve standard error of 1{,}612 (CV 50.5\%), reflecting Mack's column-specific variance parameters $\sigma_j^2$; the NB--CL total CV is thus somewhat lower than Mack's on this portfolio. Practitioners should note that the development pattern in this dataset is non-stationary (Table~\ref{tab:aus_tri}), which contributes to the large estimated overdispersion.

\subsubsection{Diagnostics}

Figure~\ref{fig:aus_diagnostics} shows diagnostic plots for the NB--CL fit. The left panel displays Pearson residuals against log-fitted values: residuals are centred around zero with no systematic pattern, supporting the log-additive mean structure. The right panel shows the profile likelihood for $\kappa$. The curve is unimodal and well-peaked around the MLE of~4.8, with the 95\% confidence interval $[2.8, 7.7]$ confirming that $\kappa$ is identifiable despite the small triangle size ($7 \times 7$, 28~cells).

\begin{figure}[!htbp]
\centering
\includegraphics[width=\textwidth]{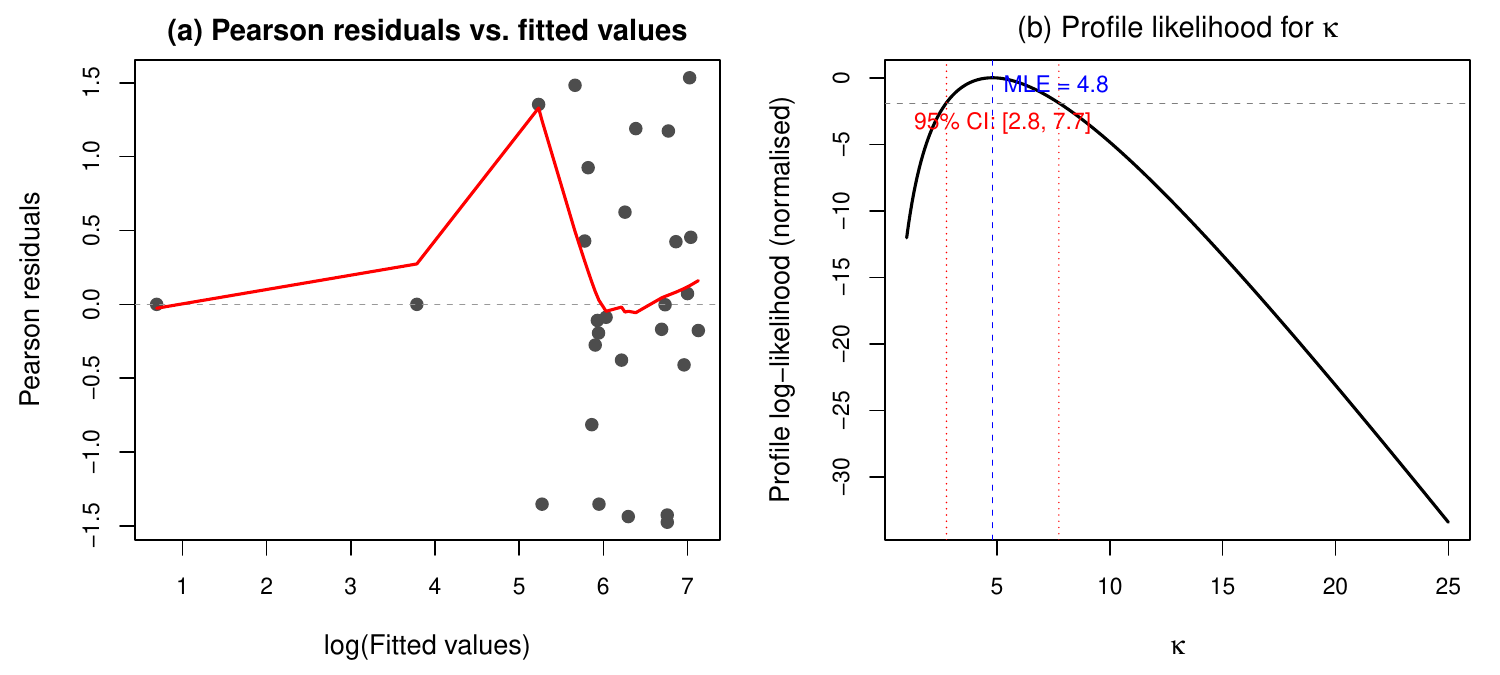}
\caption{Diagnostic plots for NB--CL fit to Australian motor bodily injury count data. (a)~Pearson residuals vs.\ log-fitted values. (b)~Profile likelihood for~$\kappa$ with MLE and 95\% confidence interval.}
\label{fig:aus_diagnostics}
\end{figure}

Figure~\ref{fig:aus_residuals} displays residuals stratified by accident year and development year. The accident-year panel shows no systematic pattern. The development-year panel reveals a mild drift in residual medians across early development periods, which may reflect the non-stationary reporting patterns documented in the data description---the proportion of same-year finalisations increases substantially from 1993 to 1999. This structure would not be captured by the basic log-additive model and motivates the calendar-year extensions discussed in Section~\ref{sec:extensions}.

\begin{figure}[!htbp]
\centering
\includegraphics[width=\textwidth]{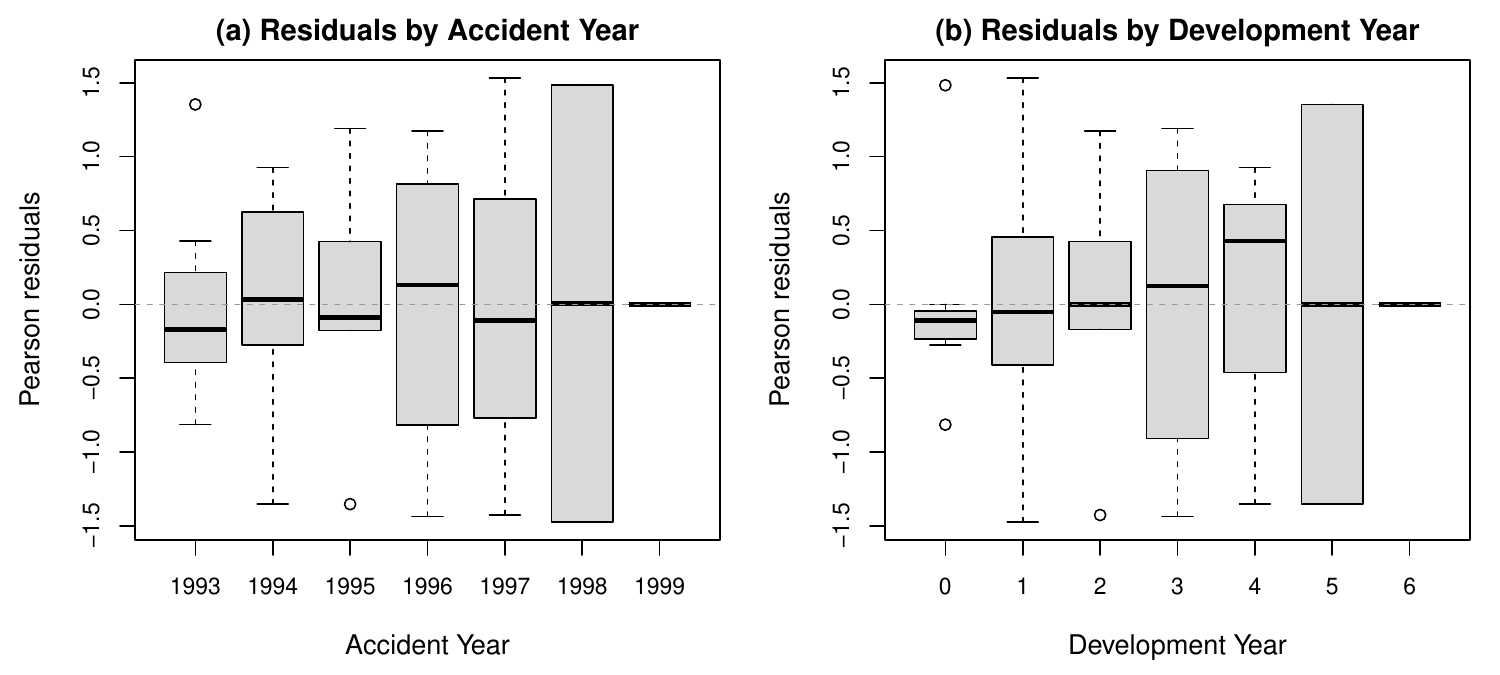}
\caption{Pearson residuals by factor level for Australian motor bodily injury. (a)~By accident year. (b)~By development year.}
\label{fig:aus_residuals}
\end{figure}

\subsection{Benchmark comparison: Taylor--Ashe paid amounts}

The Taylor--Ashe triangle \citep{taylor1983} is the standard benchmark
in the reserving literature and serves here purely for numerical
comparison. The NB--CL model is applied as a working approximation;
the micro-level derivation of Section~4 does not apply to paid amounts,
and a rigorous treatment would require Tweedie GLMs or a
frequency-severity decomposition \citep{wuthrich2008}.

Fitting \texttt{MASS::glm.nb} to the $10 \times 10$ triangle yields
$\hat{\kappa} = 13.8$. The classical Chain--Ladder point estimate, equivalently the Poisson CL and ODP CL estimate, is \$18.68 million
(18{,}680{,}856); the NB--CL point estimate is \$18.09 million
(18{,}085{,}795), a $3.2\%$ departure. This is a concrete instance of the
finite-$\kappa$ point-estimate difference of Section~8.2: the NB2 working
weights downweight the high-mean cells relative to the Poisson weights,
shifting the fitted development structure. The NB--CL (corrected) interval is $[13{,}288{,}238;\, 24{,}447{,}436]$,
compared to the ODP bootstrap interval of $[13{,}400{,}018;\, 25{,}308{,}582]$. The two interval
widths are close (11.2M vs.\ 11.9M), consistent with the simulation finding
(Table~\ref{tab:simulation}) that the gap between NB--CL and the ODP residual
bootstrap narrows as overdispersion weakens; at $\hat\kappa = 13.8$ the two are in
the mild-overdispersion regime. The
structural conclusions of the paper rest entirely on the Australian
motor bodily injury illustration.

\section{Discussion}

\subsection{Interpretability and micro-level consistency}
The NB--CL model sets itself apart from quasi-likelihood and moment-based approaches through its interpretability. The micro-level construction of Section~\ref{sec:micro} derives the negative binomial as the marginal of a Poisson--Gamma process in which the Gamma shock captures cell-level overdispersion in the incremental counts, i.e.\ period-to-period fluctuation in the claim-generating environment around the fitted development structure, rather than heterogeneity across accident years, which the free accident-year parameters absorb. As such, the overdispersion routinely observed in claims triangles is not a statistical nuisance to be absorbed by a dispersion parameter, but a quantitative reading of the portfolio's reporting-pattern instability.  This is illustrated by the Australian motor bodily injury example of Section~\ref{sec:ausbi}: the fitted $\hat\kappa = 4.8$ corresponds to a coefficient of variation of $1/\sqrt{\hat\kappa} \approx 46\%$ in the cell-level shocks around the fitted development pattern. That figure is inflated by the structured reporting-pattern drift documented in Section~\ref{sec:ausbi}, which the constant-$\kappa$ model absorbs rather than represents; what $\kappa$ does \emph{not} measure is heterogeneity across accident-year intensities, which the free accident-year effects absorb.

The ODP and Mack models reproduce the Chain--Ladder point estimates and, in moderate-$\kappa$ regimes, variance estimates of broadly similar magnitude to the NB--CL ones (though on the Australian portfolio Mack's total CV is the larger; Section~\ref{sec:ausbi}). Neither model explains, however, why the assumed variance structure should hold. Note that this is not a
numerical limitation but a structural one: a variance assumption without
a generative model is difficult to validate and to extend.

\subsection{Position within the Chain--Ladder family}

Section~7 positions the Chain--Ladder family relative to the NB--CL model: Poisson CL as the exact $\kappa \to \infty$ limit, ODP as a structurally distinct quasi-likelihood framework agreeing only locally, and Mack as a non-nested conditional second-moment specification sharing the point estimates. The discussion-level content is more modest than a strict hierarchy: within the cross-classified, unconditional branch of the family, the ODP quasi-likelihood and the residual bootstrap are second-moment or simulation-based approximations to the NB--CL predictive distribution; the Mack model belongs to a separate, conditional branch, related through shared point estimates rather than through approximation.

\subsection{Uncertainty quantification}

The NB--CL model provides coherent predictive distributions via the parametric bootstrap. Unlike the residual (ODP) bootstrap, which resamples Pearson residuals, the NB--CL bootstrap resamples from a fully specified distribution. This ensures consistency between the assumed model and the uncertainty quantification procedure.

The variance decomposition~\eqref{eq:vardecomp2} explicitly separates process and estimation variance, mirroring the structure of \citet{mack1993} and \citet{merz2007} but arising from a likelihood-based framework. The accident-year-level prediction intervals in Table~\ref{tab:aus_reserves} illustrate the practical value of this decomposition, showing how uncertainty varies across accident years as a function of the remaining development.

\subsection{Limitations}

The NB--CL model has four limitations.

The micro-level derivation in Section~4, i.e.\ Poisson arrivals with Gamma heterogeneity yielding negative binomial counts, is rigorous for incremental claim counts. When applied to incremental paid amounts, as is common in actuarial practice, the NB--CL model should be viewed as a convenient working approximation rather than a structural likelihood. For a principled treatment of claim amounts, Tweedie GLMs \citep{wuthrich2008} or frequency-severity decompositions provide more appropriate foundations.

The classical Chain--Ladder development factors can be computed in closed form from cumulative column totals, without iterative optimisation. The NB--CL model, formulated as a GLM, requires iterative estimation via IRLS (iteratively reweighted least squares). While modern software makes this computationally trivial, practitioners who value closed-form expressions may find this less appealing. In the Poisson limit $\kappa \to \infty$, however, the NB--CL maximum likelihood estimators coincide with the classical marginal-totals estimators, so the connection is not entirely lost.

The log-additive structure assumes that development-year effects are constant across accident years, which may not hold in portfolios undergoing structural changes, as illustrated by the non-stationary development pattern in the Australian motor bodily injury data (Section~\ref{sec:ausbi}). The model also assumes independence across cells, which may be violated in the presence of calendar-year effects or operational changes. The misspecification study (Section~\ref{sec:misspec}) shows that unmodelled calendar-year effects produce the most concerning degradation in coverage (92\% vs.\ nominal 95\%), while development-varying dispersion is relatively benign (95\%). Finally, the model assumes a single dispersion parameter $\kappa$ for all cells, whereas in practice dispersion may vary by development year or calendar year.

\subsection{Extensions}\label{sec:extensions}

Four natural extensions remain. Allowing
$\beta_j$ to vary across accident years would capture structural changes in
reporting behaviour. Adding a calendar-year term $\gamma_{i+j}$ to the
linear predictor would accommodate superimposed inflation. A Bayesian NB--CL model with priors on $(\boldsymbol{\alpha},
\boldsymbol{\beta}, \kappa)$ would provide posterior inference and natural
regularisation for sparse triangles, and is the theoretically preferred
uncertainty quantification procedure: it propagates parameter uncertainty
through the posterior rather than approximating it via a profile likelihood
correction, hereby eliminating the need for the bootstrap entirely.
Empirical validation, particularly for small triangles where estimation
uncertainty in $(\boldsymbol{\alpha}, \boldsymbol{\beta})$ is large relative
to triangle size, still needs to be carried out. Joint modelling of multiple lines of business via copulas or hierarchical
random effects, and integration with micro- or macro-level severity models,
would together extend the NB--CL frequency model into a fully stochastic
multi-line reserving framework.

The Bayesian NB--CL model is closely related to the exact 
Bayesian model of \citet{wuthrich2008}, Section~4.3.1, who 
place a Gamma prior on the Poisson mean in the ODP framework 
and derive the negative binomial as the exact marginal 
distribution. The NB--CL model can therefore be viewed as 
the frequentist profile likelihood counterpart of their 
Bayesian conjugate analysis: both yield the same marginal 
distribution for incremental counts, but the NB--CL 
dispersion parameter $\kappa$ is estimated by maximising 
the adjusted profile likelihood rather than treated as a 
hyperparameter of the Gamma prior. Under a flat prior on 
$\kappa$, the two approaches coincide asymptotically; for 
small triangles, the Bayesian formulation with an 
informative prior on $\kappa$ is the theoretically preferred 
procedure, as it propagates hyperparameter uncertainty 
through the posterior rather than conditioning on a point 
estimate.

\section{Conclusion}

The classical Chain--Ladder method has no likelihood, and its two main
stochastic companions each fall short on a different front: the Mack model
stops at the second moment, while the ODP framework rests on a
quasi-likelihood variance structure without a generative interpretation.
The NB--CL model fills the latter gap. Incremental counts are modelled as
negative binomial with a log-additive mean, the resulting MLEs coincide
with the Chain--Ladder estimators in the Poisson limit, and the dispersion
parameter $\kappa$ is given a structural reading via the micro-level
Poisson--Gamma construction of Section~\ref{sec:micro}: $\kappa$ is the inverse variance of the cell-level Gamma shocks around
the fitted mean structure, hereby turning what is usually treated as a statistical nuisance into an interpretable portfolio characteristic.

We have seen in Section~7 that, within the Chain--Ladder family, only the Poisson CL model is an exact special case of the NB--CL likelihood: the ODP model approximates it only locally, while the Mack model merely shares its point estimates. This positioning clarifies the assumptions underlying the classical reserving methods.

Estimation of the NB--CL model can be performed using standard GLM techniques, with the dispersion parameter estimated via the adjusted profile likelihood. A parametric bootstrap procedure incorporates both process and estimation variance, yielding well-calibrated predictive intervals. Simulation studies demonstrated that the NB--CL model achieves nominal coverage under correct specification and degrades gracefully under model misspecification, with the exception of unmodelled calendar-year effects, which produce modest undercoverage and should be tested for in practice.

Empirical illustrations on both claim count data (Australian motor bodily injury) and paid amounts data (Taylor--Ashe) confirmed that the model fits both data types without numerical difficulty. The claim count illustration, where the micro-level assumptions hold exactly, demonstrated the structural interpretation of $\kappa$ as cell-level
shock dispersion, with the portfolio's documented reporting-pattern
instability as its visible source. Heterogeneity at the accident-year
level, by contrast, is absorbed by the free accident-year parameters and
requires anchored or hierarchical levels for identification
(Remark~\ref{rem:absorption}).

Time-varying development patterns, calendar-year effects, the Bayesian
formulation, and frequency--severity coupling still need to be investigated;
Section~\ref{sec:extensions} sketches the corresponding entry points.

\section*{Acknowledgements}
The author thanks Michel Denuit for helpful comments on an earlier version of
this paper.

\paragraph{Use of generative AI.} A large language model (Anthropic's Claude)
was used as an assistive tool during the preparation of this manuscript. All
scientific content is the author's own, and the author takes sole
responsibility for it.

\bibliographystyle{apalike}
\bibliography{references}

\appendix
\section{R Code for NB--CL Estimation}\label{app:code}

R functions implementing the methods described in this paper are available in the \texttt{hgr} package at \url{https://github.com/robin-vo/hgr}. The code below provides a self-contained implementation for reproducibility. The listing is a minimal illustration: failed bootstrap refits are counted and reported via the \texttt{n\_failed} attribute rather than silently dropped, so the effective number of bootstrap samples is auditable. The verification scripts accompanying the paper additionally wrap each refit in a hard per-fit timeout and checkpoint results incrementally.

\begin{verbatim}
library(MASS)

triangle_to_long <- function(triangle) {
  I <- nrow(triangle)
  n <- I * (I + 1) / 2
  df <- data.frame(AY = integer(n), DY = integer(n), A = numeric(n))
  k <- 0
  for (i in 1:I) {
    for (j in 1:I) {
      if (i + j <= I + 1 && !is.na(triangle[i, j])) {
        k <- k + 1
        df$AY[k] <- i; df$DY[k] <- j; df$A[k] <- triangle[i, j]
      }
    }
  }
  df <- df[1:k, ]
  df$AY <- factor(df$AY); df$DY <- factor(df$DY)
  df
}

fit_nbcl <- function(triangle) {
  df <- triangle_to_long(triangle)
  fit <- glm.nb(A ~ AY + DY, data = df, link = log)
  reparam_simplex(fit)
}

reparam_simplex <- function(fit) {
  # Convert glm.nb output (treatment contrasts, beta_0 = 0)
  # to the simplex parameterisation: sum_j exp(beta_j) = 1.
  cf          <- coef(fit)
  intercept   <- cf["(Intercept)"]
  alpha_tilde <- c(0, cf[grep("^AY", names(cf))])
  beta_tilde  <- c(0, cf[grep("^DY", names(cf))])
  # Absorb the intercept into alpha_tilde
  alpha_tilde <- alpha_tilde + intercept
  S <- sum(exp(beta_tilde))
  fit$alpha_simplex <- alpha_tilde + log(S)   # mu_i = exp(alpha_i)
  fit$beta_simplex  <- beta_tilde - log(S)    # sum_j exp(beta_j) = 1
  fit$w_j           <- exp(fit$beta_simplex)  # development weights
  fit
}

kappa_corrected <- function(fit) {
  n <- length(fitted(fit))
  p <- length(coef(fit))
  fit$theta * (n - p) / n
}

bootstrap_reserve <- function(fit, triangle, B = 5000, correct = TRUE) {
  I <- nrow(triangle)
  df <- triangle_to_long(triangle)
  n <- nrow(df); p <- length(coef(fit))
  kappa0 <- if (correct) fit$theta * (n-p)/n else fit$theta
  reserves <- rep(NA_real_, B)
  n_failed <- 0L
  for (b in seq_len(B)) {
    df$A_star <- rnbinom(n, size = kappa0, mu = fitted(fit))
    fit_b <- tryCatch(
      glm.nb(A_star ~ AY + DY, data = df, link = log),
      error = function(e) NULL)
    if (is.null(fit_b)) { n_failed <- n_failed + 1L; next }
    kb <- if (correct) fit_b$theta * (n-p)/n else fit_b$theta
    R_b <- 0
    for (i in 2:I) for (j in (I-i+2):I) {
      mu_ij <- predict(fit_b, newdata = data.frame(
        AY = factor(i, 1:I), DY = factor(j, 1:I)), type = "response")
      R_b <- R_b + rnbinom(1, size = kb, mu = mu_ij)
    }
    reserves[b] <- R_b
  }
  out <- reserves[!is.na(reserves)]
  attr(out, "n_failed") <- n_failed  # effective B = length(out)
  out
}
\end{verbatim}

\section{Proof of Poisson--Gamma Mixture Result}\label{app:proof}

\begin{proposition}
Let $N \mid \lambda \sim \text{Poisson}(\lambda)$ and $\lambda \sim \text{Gamma}(\kappa, \kappa/\mu)$. Then $N \sim \text{NegBin}(\mu, \kappa)$.
\end{proposition}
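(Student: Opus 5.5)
The proof is a direct marginalisation: compute the probability mass function of $N$ by integrating the Poisson kernel against the Gamma density, then match the result to the mean-dispersion form of the negative binomial pmf used in the log-likelihood of Section~5. We read $\mathrm{Gamma}(\kappa,\kappa/\mu)$ as shape $\kappa$ and rate $\kappa/\mu$. Its density is $g(\lambda)=(\kappa/\mu)^\kappa\lambda^{\kappa-1}e^{-\kappa\lambda/\mu}/\Gamma(\kappa)$, so $\E[\lambda]=\mu$ and $\mathrm{Var}(\lambda)=\mu^2/\kappa$. Fixing this convention explicitly is the main point of care: under the scale reading the moments would not come out right, so we will check $\E[\lambda]=\mu$ before anything else.

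For $n\in\{0,1,2,\ldots\}$ we write
\[
\Pr(N=n)=\int_0^\infty \frac{e^{-\lambda}\lambda^n}{n!}\,\frac{(\kappa/\mu)^\kappa}{\Gamma(\kappa)}\lambda^{\kappa-1}e^{-\kappa\lambda/\mu}\,d\lambda .
\]
We collect the exponent as $-\lambda(1+\kappa/\mu)=-\lambda(\kappa+\mu)/\mu$ and the power as $\lambda^{n+\kappa-1}$. The remaining integral is then a Gamma integral with shape $n+\kappa$ and rate $(\kappa+\mu)/\mu$, which evaluates to $\Gamma(n+\kappa)\,\mu^{n+\kappa}/(\kappa+\mu)^{n+\kappa}$. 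Interchanging is harmless here because the integrand is nonnegative.

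Simplifying, the powers of $\mu$ combine so that
\[
\Pr(N=n)=\frac{\Gamma(n+\kappa)}{\Gamma(\kappa)\,n!}\Bigl(\frac{\kappa}{\kappa+\mu}\Bigr)^{\kappa}\Bigl(\frac{\mu}{\kappa+\mu}\Bigr)^{n}.
\]
Its logarithm is exactly the expression for $\log f_{\mathrm{NegBin}}(n;\mu,\kappa)$ given in Section~5, which identifies $N\sim\mathrm{NegBin}(\mu,\kappa)$.

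As a consistency check, and to recover the moments quoted in the earlier Poisson--Gamma proposition, we use the tower property. This gives $\E[N]=\E[\lambda]=\mu$ and $\mathrm{Var}(N)=\E[\lambda]+\mathrm{Var}(\lambda)=\mu+\mu^2/\kappa$. The version with a weight $w$ then follows by replacing $\mu$ with $\mu w$, since $\lambda w\sim\mathrm{Gamma}(\kappa,\kappa/(\mu w))$.
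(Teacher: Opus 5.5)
Your proof is correct and takes the same route as the paper's: integrate the Poisson kernel against the shape--rate Gamma density, evaluate the resulting Gamma integral with shape $n+\kappa$ and rate $(\kappa+\mu)/\mu$, and simplify to the mean--dispersion negative binomial pmf. The tower-property moment check and the weighted version via $\lambda w\sim\mathrm{Gamma}(\kappa,\kappa/(\mu w))$ are both correct, though neither is needed for this statement.
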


\begin{proof}
The marginal probability mass function of $N$ is
\begin{align*}
\Pr(N = n) &= \int_0^\infty \Pr(N = n \mid \lambda)\, f_\lambda(\lambda)\, d\lambda \\
&= \int_0^\infty \frac{\lambda^n e^{-\lambda}}{n!} \cdot \frac{(\kappa/\mu)^\kappa}{\Gamma(\kappa)} \lambda^{\kappa-1} e^{-\kappa\lambda/\mu}\, d\lambda \\
&= \frac{(\kappa/\mu)^\kappa}{n!\, \Gamma(\kappa)} \int_0^\infty \lambda^{n+\kappa-1} e^{-\lambda(1+\kappa/\mu)}\, d\lambda \\
&= \frac{(\kappa/\mu)^\kappa}{n!\, \Gamma(\kappa)} \cdot \frac{\Gamma(n+\kappa)}{(1+\kappa/\mu)^{n+\kappa}} \\
&= \frac{\Gamma(n+\kappa)}{n!\, \Gamma(\kappa)} \left(\frac{\kappa}{\kappa+\mu}\right)^\kappa \left(\frac{\mu}{\kappa+\mu}\right)^n,
\end{align*}
which is the probability mass function of $\text{NegBin}(\mu, \kappa)$ in the mean-dispersion parameterisation.
\end{proof}

\end{document}